\documentclass[sigconf,nonacm]{acmart}
\renewcommand\footnotetextcopyrightpermission[1]{}
\newcommand{\nCorpusProjects}{500}
\newcommand{\nLoops}{18,707}
\newcommand{\nPureLoops}{2,510}
\newcommand{\pctPureLoops}{13.4\%}

\newcommand{\pctProjectsPureLoop}{59\%}
\newcommand{\nFreeLoops}{1,010}

\newcommand{\nPureLoopsInWarp}{516}

\newcommand{\nPureLoopsOutsideWarp}{1,994}

\newcommand{\nProjectsPureLoopOutsideWarp}{293}
\newcommand{\pctProjectsPureLoopOutsideWarp}{59\%}

\newcommand{\nProjectsFreeLoopOutsideWarp}{130}
\newcommand{\pctProjectsFreeLoopOutsideWarp}{26\%}

\newcommand{\nKindinputpolling}{71}
\newcommand{\nKindothercomputation}{1,412}
\newcommand{\nKindscorelives}{145}
\newcommand{\nKindlistwork}{402}
\newcommand{\nKindphysics}{224}
\newcommand{\nKindtimercounter}{243}
\newcommand{\nKindrandomAI}{13}

\newcommand{\nIdleMeasured}{499}
\newcommand{\nIdleHiddenDen}{477}
\newcommand{\nIdleHiddenDiverge}{81}
\newcommand{\pctIdleHiddenDiverge}{17\%}

\newcommand{\nIdleMutedDen}{477}
\newcommand{\nIdleMutedDiverge}{95}
\newcommand{\pctIdleMutedDiverge}{20\%}

\newcommand{\nIdleFasterDen}{499}
\newcommand{\nIdleFasterDiverge}{12}
\newcommand{\pctIdleFasterDiverge}{2\%}

\newcommand{\nIdleRawDen}{499}
\newcommand{\nIdleRawDiverge}{258}
\newcommand{\pctIdleRawDiverge}{52\%}

\newcommand{\nIdleRoundsOne}{362}

\newcommand{\nIdleRoundsAtBudget}{51}

\newcommand{\nIdleFree}{137}

\newcommand{\pctIdleLintBaseline}{27\%}
\newcommand{\nLintFlagFree}{77}
\newcommand{\nLintFlagStill}{216}
\newcommand{\nLintUnflagFree}{60}
\newcommand{\nLintUnflagStill}{146}
\newcommand{\pctLintRecall}{56\%}
\newcommand{\pctLintPrecision}{26\%}

\newcommand{\nDrivenMeasured}{499}
\newcommand{\nDrivenWithKeys}{353}
\newcommand{\nDrivenHiddenDen}{477}
\newcommand{\nDrivenHiddenDiverge}{84}
\newcommand{\pctDrivenHiddenDiverge}{18\%}

\newcommand{\nDrivenMutedDen}{477}
\newcommand{\nDrivenMutedDiverge}{116}
\newcommand{\pctDrivenMutedDiverge}{24\%}
\newcommand{\nDrivenMutedStage}{37}

\newcommand{\nDrivenFasterDen}{499}
\newcommand{\nDrivenFasterDiverge}{13}
\newcommand{\pctDrivenFasterDiverge}{3\%}

\newcommand{\nDrivenRawDen}{499}
\newcommand{\nDrivenRawDiverge}{305}
\newcommand{\pctDrivenRawDiverge}{61\%}
\newcommand{\nDrivenRawStage}{275}
\newcommand{\nDrivenRawVarsOnly}{30}
\newcommand{\nDrivenRawStageOnly}{93}

\newcommand{\nDrivenHiddenNotMuted}{31}
\newcommand{\nDrivenHiddenAndMuted}{53}
\newcommand{\nDrivenMutedNotHidden}{63}

\newcommand{\nDrivenFree}{132}

\newcommand{\movedMScore}{35}

\newcommand{\movedRScore}{45}

\newcommand{\nTappedMutedDen}{477}
\newcommand{\nTappedMutedDiverge}{117}
\newcommand{\pctTappedMutedDiverge}{25\%}

\newcommand{\nTappedRawDen}{499}
\newcommand{\nTappedRawDiverge}{312}
\newcommand{\pctTappedRawDiverge}{63\%}

\newcommand{\nBudgetSubset}{137}

\newcommand{\nBudgetDivergeSix}{14}

\newcommand{\nBudgetDivergeNinetyNine}{12}

\newcommand{\nBudgetDivergeFifteenNinetyNine}{12}

\newcommand{\nBudgetWitness}{137}
\newcommand{\nBudgetWitnessNever}{120}

\newcommand{\nRealProjects}{60}
\newcommand{\nRealAgreeOne}{50}
\newcommand{\pctRealAgreeOne}{83\%}
\newcommand{\nRealAgreeTwenty}{34}
\newcommand{\nRealAgreeBoth}{33}
\newcommand{\nRealDifferTwenty}{22}

\newcommand{\nRealDifferTwentyFree}{3}
\newcommand{\nRealDifferTwentyThrottled}{19}

\newcommand{\nFlipRemixes}{568}

\newcommand{\nFlipRemixesFish}{274}
\newcommand{\nFlipRemixesFly}{294}
\newcommand{\nFlipBasePass}{876}
\newcommand{\nFlipBaseFail}{244}
\newcommand{\nFlipBaseUntested}{584}
\newcommand{\nFlipRules}{1,704}
\newcommand{\nFlipB}{546}
\newcommand{\pctFlipB}{32\%}

\newcommand{\nFlipM}{45}

\newcommand{\nFlipRevM}{4}

\newcommand{\nFlipC}{0}

\newcommand{\nFlipR}{55}

\newcommand{\nFlipRevR}{4}

\newcommand{\nFlipRateBasePass}{950}
\newcommand{\nFlipRateBaseFail}{680}
\newcommand{\nFlipRateBaseUntested}{74}
\newcommand{\nFlipRateRules}{1,704}
\newcommand{\nFlipRateB}{652}

\newcommand{\nFlipRateM}{7}

\newcommand{\flipRateKindsM}{fail$\rightarrow$pass 4, pass$\rightarrow$fail 3}
\newcommand{\nFlipRateC}{0}

\newcommand{\nFlipRateR}{12}

\newcommand{\nFlipRateRevR}{9}

\newcommand{\nRealThrottledNoFreeFrame}{18}

\newcommand{\nFidAGames}{60}
\newcommand{\nFidAIdentical}{23}
\newcommand{\nFidAGate}{1}
\newcommand{\nFidAAmbiguous}{1}

\newcommand{\nFidASensing}{35}
\newcommand{\pctFidARoundsAgree}{99.4\%}
\newcommand{\nFidAClassAgree}{59}

\newcommand{\nFidBGames}{60}
\newcommand{\nFidBIdentical}{25}
\newcommand{\nFidBGate}{1}
\newcommand{\nFidBAmbiguous}{1}

\newcommand{\nFidBSensing}{33}
\newcommand{\pctFidBRoundsAgree}{99\%}
\newcommand{\nFidBClassAgree}{60}

\newcommand{\nFidMGames}{60}
\newcommand{\nFidMIdentical}{21}
\newcommand{\nFidMGate}{1}
\newcommand{\nFidMAmbiguous}{3}

\newcommand{\nFidMSensing}{35}
\newcommand{\pctFidMRoundsAgree}{96.3\%}
\newcommand{\nFidMClassAgree}{60}

\newcommand{\nFidCGames}{60}
\newcommand{\nFidCIdentical}{23}
\newcommand{\nFidCGate}{1}
\newcommand{\nFidCAmbiguous}{1}

\newcommand{\nFidCSensing}{35}
\newcommand{\pctFidCRoundsAgree}{99\%}
\newcommand{\nFidCClassAgree}{59}

\newcommand{\nFidRGames}{60}
\newcommand{\nFidRIdentical}{23}
\newcommand{\nFidRGate}{0}
\newcommand{\nFidRAmbiguous}{2}

\newcommand{\nFidRSensing}{35}
\newcommand{\pctFidRRoundsAgree}{95\%}
\newcommand{\nFidRClassAgree}{60}

\newcommand{\nWhiskerSuites}{16}
\newcommand{\nWhiskerTests}{119}
\newcommand{\nWhiskerBasePass}{103}
\newcommand{\nWhiskerBaseFail}{13}
\newcommand{\nWhiskerBaseSkip}{3}
\newcommand{\nWhiskerFlipsB}{53}

\newcommand{\nWhiskerRevB}{42}

\newcommand{\nWhiskerFlipsM}{20}
\newcommand{\pctWhiskerFlipsM}{17\%}
\newcommand{\nWhiskerRevM}{9}

\newcommand{\nWhiskerFlipsC}{1}

\newcommand{\nWhiskerFlipsR}{36}
\newcommand{\pctWhiskerFlipsR}{30\%}
\newcommand{\nWhiskerRevR}{25}

\newcommand{\nWhiskerNativeTests}{119}

\newcommand{\nWhiskerNativeAgreeA}{108}

\newcommand{\nWhiskerNativeAgreeW}{107}

\newcommand{\nBoatRemixes}{224}
\newcommand{\nBoatVerdicts}{2,016}

\newcommand{\nBoatFlipsB}{389}

\newcommand{\nBoatRevB}{353}

\newcommand{\nBoatFlipsM}{389}
\newcommand{\pctBoatFlipsM}{19\%}
\newcommand{\nBoatRevM}{361}

\newcommand{\nBoatFlipsC}{1}

\newcommand{\nBoatFlipsR}{885}
\newcommand{\pctBoatFlipsR}{44\%}
\newcommand{\nBoatRevR}{296}

\newcommand{\nBoatSpeedFlipsM}{111}

\newcommand{\nRandomAttempts}{2,141}
\newcommand{\nRandomOk}{270}

\newcommand{\randomMaxId}{1,390,000,000}
\newcommand{\nRandomProjects}{270}

\newcommand{\nRandomIdleMeasured}{263}
\newcommand{\nRandomIdleFree}{103}

\newcommand{\nRandomIdleBDen}{209}
\newcommand{\nRandomIdleBDiv}{23}
\newcommand{\pctRandomIdleBDiv}{11\%}
\newcommand{\nRandomIdleMDen}{209}
\newcommand{\nRandomIdleMDiv}{19}
\newcommand{\pctRandomIdleMDiv}{9\%}
\newcommand{\nRandomIdleCDen}{263}
\newcommand{\nRandomIdleCDiv}{7}
\newcommand{\pctRandomIdleCDiv}{3\%}
\newcommand{\nRandomIdleRDen}{262}
\newcommand{\nRandomIdleRDiv}{77}
\newcommand{\pctRandomIdleRDiv}{29\%}
\newcommand{\nRandomDrivenMeasured}{263}
\newcommand{\nRandomDrivenFree}{101}

\newcommand{\nRandomDrivenBDen}{209}
\newcommand{\nRandomDrivenBDiv}{25}
\newcommand{\pctRandomDrivenBDiv}{12\%}
\newcommand{\nRandomDrivenMDen}{209}
\newcommand{\nRandomDrivenMDiv}{24}
\newcommand{\pctRandomDrivenMDiv}{11\%}
\newcommand{\nRandomDrivenCDen}{263}
\newcommand{\nRandomDrivenCDiv}{8}
\newcommand{\pctRandomDrivenCDiv}{3\%}
\newcommand{\nRandomDrivenRDen}{262}
\newcommand{\nRandomDrivenRDiv}{87}
\newcommand{\pctRandomDrivenRDiv}{33\%}
\newcommand{\casePongTwoWebScoreA}{3}
\newcommand{\caseRunnerWebLivesA}{0}

\newcommand{\casePongTwoWebScoreB}{0}
\newcommand{\caseRunnerWebLivesB}{3}

\newcommand{\casePongTwoWebScoreM}{191}
\newcommand{\caseRunnerWebLivesM}{0}

\newcommand{\casePongTwoRoundsA}{1}

\newcommand{\casePongTwoRoundsM}{24}

\newcommand{\casePongTwoScoreA}{1}
\newcommand{\casePongTwoScoreM}{176}
\newcommand{\casePongTwoScoreB}{0}

\newcommand{\casePongRoundsB}{1}

\newcommand{\caseRunnerLivesA}{0}
\newcommand{\caseRunnerLivesB}{3}
\newcommand{\caseRunnerRoundsB}{21.5}
\newcommand{\caseRunnerRoundsM}{9.63}
\newcommand{\caseRunnerLivesM}{0}
\newcommand{\caseRunnerHidden}{4}

\newcommand{\caseFruitRoundsR}{10.59}
\newcommand{\caseFruitScoreA}{10}
\newcommand{\caseFruitScoreR}{82}

\newcommand{\editorVisibleSixty}{615}
\newcommand{\editorHiddenSixty}{24,353,468}

\newcommand{\editorVisibleThirty}{303}
\newcommand{\editorHiddenThirty}{23,422,395}
\newcommand{\editorRoundsPerFrameHiddenThirty}{77,000}
\newcommand{\editorHiddenThirtyCpuFour}{6,007,493}
\newcommand{\editorHiddenThirtyCpuTwenty}{1,143,273}
\newcommand{\editorVisibleThirtyCpuTwenty}{302}
\newcommand{\editorTurboVisibleSixty}{14,389,449}
\newcommand{\editorHiddenOverVisibleThirty}{77,000}

\newcommand{\editorMachineShort}{an Apple M5 Max laptop}
\newcommand{\harnessVisible}{300}
\newcommand{\harnessHidden}{7,200}

\newcommand{\harnessRoundsCap}{24}
\newcommand{\harnessRoundsFast}{99}

\newcommand{\vmVersionNote}{5.0.300}
\newcommand{\mutedOneIn}{four}

\newcommand{\nWhiskerFlipsMAgreed}{20}

\newcommand{\nWhiskerFlipsRAgreed}{34}

\newcommand{\nWhiskerAgreedTests}{108}

\newcommand{\ruleRevsM}{mf2 2, mf1 1, mf3 1}

\newcommand{\ruleRateRevsM}{mfr3 4, mfr1 3}

\newcommand{\nBoatRateFlipsM}{361}
\newcommand{\nBoatTriggerFlipsM}{28}

\newcommand{\nDrivenKeysMutedDiverge}{101}
\newcommand{\nDrivenKeysMutedDen}{341}
\newcommand{\pctDrivenKeysMutedDiverge}{30\%}

\newcommand{\nDrivenNoKeysMutedDiverge}{15}
\newcommand{\nDrivenNoKeysMutedDen}{136}
\newcommand{\pctDrivenNoKeysMutedDiverge}{11\%}

\newcommand{\pctDrivenBudgetNinetyNineMutedDiverge}{24\%}

\newcommand{\pctDrivenBudgetNinetyNineRawDiverge}{59\%}

\newcommand{\pctSeedTwoDiverge}{34\%}
\newcommand{\nWebDenB}{59}
\newcommand{\nWebAgreeB}{54}

\newcommand{\nWebDenM}{59}
\newcommand{\nWebAgreeM}{58}
\newcommand{\nWebBothM}{19}

\newcommand{\nWebDivergeM}{20}
\newcommand{\nWebDenR}{60}
\newcommand{\nWebAgreeR}{58}

\newcommand{\nProjectsRandomPureLoop}{25}
\newcommand{\nDrivenMutedDivergeRandomLoop}{6}

\newcommand{\nIdleFreeAboveTwo}{114}
\newcommand{\nIdleFreeAboveTen}{69}

\newcommand{\nDrivenProjRawDiverge}{253}
\newcommand{\nDrivenProjRawDen}{499}
\newcommand{\pctDrivenProjRawDiverge}{51\%}
\newcommand{\nDrivenHiddenReshown}{275}

\newcommand{\nSeedTwoHiddenPaired}{477}
\newcommand{\nSeedTwoHiddenLabelAgree}{466}

\newcommand{\nSeedTwoMutedDiverge}{117}
\newcommand{\nSeedTwoMutedDen}{477}
\newcommand{\pctSeedTwoMutedDiverge}{25\%}
\newcommand{\nSeedTwoMutedPaired}{477}
\newcommand{\nSeedTwoMutedLabelAgree}{470}

\newcommand{\nSeedTwoRawPaired}{499}
\newcommand{\nSeedTwoRawLabelAgree}{497}

\newcommand{\nRepeatPairs}{499}
\newcommand{\nRepeatIdentical}{499}
\newcommand{\nSingleGames}{477}
\newcommand{\nSingleGamesAny}{61}
\newcommand{\pctSingleGamesAny}{13\%}
\newcommand{\nSingleSprites}{1,950}
\newcommand{\nSingleSpritesDiverge}{62}

\newcommand{\pctDrivenBudgetThreeNinetyNineMutedDiverge}{23\%}

\newcommand{\pctDrivenBudgetThreeNinetyNineRawDiverge}{59\%}

\newcommand{\pctDrivenBudgetFifteenNinetyNineMutedDiverge}{23\%}

\newcommand{\pctDrivenBudgetFifteenNinetyNineRawDiverge}{58\%}
\newcommand{\nDrivenBudgetFifteenNinetyNineRawExcluded}{24}

\newcommand{\nBudgetRefPairs}{499}
\newcommand{\nBudgetRefSame}{489}

\newcommand{\nPerLoopNonParking}{11,264}

\newcommand{\nPerLoopFlagged}{1,496}
\newcommand{\nPerLoopFlaggedFree}{194}
\newcommand{\pctPerLoopPrecision}{13\%}
\newcommand{\pctPerLoopRecall}{17\%}
\newcommand{\pctPerLoopBase}{10\%}
\newcommand{\nPerLoopFlaggedEntered}{837}
\newcommand{\pctPerLoopPrecisionEntered}{23\%}

\newcommand{\nFidAPrefixFrames}{8,101}
\newcommand{\nFidAPrefixDisagree}{1}
\newcommand{\medFidAFirstState}{17}

\newcommand{\nRandomProjectsPureLoopOutsideWarp}{42}
\newcommand{\pctRandomProjectsPureLoopOutsideWarp}{16\%}
\newcommand{\nRandomProjectsFreeLoopOutsideWarp}{20}
\newcommand{\pctRandomProjectsFreeLoopOutsideWarp}{7\%}

\newcommand{\nGamesColour}{116}
\newcommand{\pctGamesColour}{23\%}
\newcommand{\nRandomColour}{21}
\newcommand{\nGamesBackdropLoop}{89}

\newcommand{\nDrivenNoColourMutedDiverge}{85}
\newcommand{\nDrivenNoColourMutedDen}{364}
\newcommand{\pctDrivenNoColourMutedDiverge}{23\%}
\newcommand{\nDrivenNoColourRawDiverge}{238}
\newcommand{\nDrivenNoColourRawDen}{383}
\newcommand{\pctDrivenNoColourRawDiverge}{62\%}

\newcommand{\nGamesRefCut}{1}

\newcommand{\nGamesNoSprite}{22}
\newcommand{\nRandomRefCut}{7}

\newcommand{\nRandomNoSprite}{54}

\newcommand{\ciRandomDrivenMutedDiverge}{8--17\%}

\newcommand{\nFidPrefixMulti}{1,059}
\newcommand{\nFidPrefixMultiAgree}{1,058}
\newcommand{\nFidPrefixMultiGames}{16}
\newcommand{\nFidFreeGames}{19}
\newcommand{\nFidFreeGamesAgree}{18}

\newcommand{\nRcmGames}{60}

\newcommand{\nRcmDiffersAll}{21}

\newcommand{\nRcmPaired}{59}
\newcommand{\nRcmAgree}{51}

\newcommand{\nRcmWebOnly}{5}
\newcommand{\nRcmHarnessOnly}{3}
\newcommand{\medRcmMutedRoundsPerFrame}{173}

\newcommand{\nEngineWebKitSame}{49}

\newcommand{\nEngineFirefoxSame}{48}

\newcommand{\nEngineNWebKitPairs}{60}
\newcommand{\nEngineNWebKitSame}{60}

\newcommand{\nBoatRateRevM}{348}
\newcommand{\nBoatWindowRevM}{13}

\newcommand{\nRemoteReferencePairs}{499}
\newcommand{\nRemoteReferenceSame}{499}

\newcommand{\nRemoteMutedSame}{499}

\newcommand{\nRemoteRawSame}{498}

\newcommand{\nCtRemixes}{571}
\newcommand{\nCtMoved}{239}
\newcommand{\pctCtMoved}{42\%}
\newcommand{\nCtFlipped}{34}
\newcommand{\nCtBoth}{32}
\newcommand{\nCtMovedNotFlipped}{207}
\newcommand{\nCtFlippedNotMoved}{2}

\newcommand{\editorVisibleThirtyMin}{303}

\newcommand{\editorHiddenThirtyMin}{23,416,569}
\newcommand{\editorHiddenThirtyMax}{24,928,746}

\usepackage{booktabs}
\usepackage{array}
\usepackage{xcolor}
\usepackage{algorithm}
\usepackage{algpseudocode}
\usepackage{tikz}
\usetikzlibrary{arrows.meta,positioning,shapes.geometric,backgrounds}
\newcommand{\runin}[1]{\smallskip\noindent\textbf{\emph{#1}}\ }
\newcommand{\takeaway}[1]{\smallskip\noindent\emph{#1}\smallskip}
\newcommand{\tool}{\textsc{ThrottleCheck}}
\newcommand{\Kone}{K1}
\newcommand{\Ktwo}{K2}
\newcommand{\Kthree}{K3}
\newcommand{\KoneM}{K1$'$}
\newcommand{\theSentence}{A loop that never moves a sprite, changes how one looks, draws with the pen, makes a clone, or waits runs as fast as your computer can, except while something else on the stage is being drawn; then it runs once per screen update.}
\begin{document}
\setlength{\emergencystretch}{1.5em}
\title[The Invisible Throttle]{The Invisible Throttle: Running on Borrowed Time in Scratch}
\author{Xinyue Feng}
\affiliation{\institution{Independent Researcher}\country{China}}
\email{lily.china@outlook.com}
\author{Hanyuan Shi}
\affiliation{\institution{Independent Researcher}\country{China}}
\email{shihanyuan1995@gmail.com}
\author{Yuan Si}
\authornote{Corresponding author.}
\affiliation{\institution{University of Waterloo}\country{Canada}}
\email{y3si@uwaterloo.ca}
\begin{abstract}
Scratch has 135 million registered users, most of them children, and 164 million shared projects. What they are taught about the speed of a script fits in one sentence: a loop iterates once per frame. Unfortunately, that sentence describes the exception. In the public virtual machine a frame repeats the scripts until something visible asks the screen to redraw, no script is left running, or three quarters of the frame's wall-clock time are spent. Hence a loop that does not draw is paced by whatever else is visible and by the machine. Hide the moving sprite of a two-sprite project, and the other's counting loop runs \editorHiddenOverVisibleThirty{} times faster on a laptop. No documentation states the rule, and a child who tunes a game tunes it to a rate the blocks never name.

Our key observation is that the redraw gate is one flag for the whole runtime, which implies that a loop's speed can be changed without touching its code: hide the sprite that draws (\Kone{}), change the machine's budget (\Ktwo{}), or run the program in a tool with no renderer (\Kthree{}). \tool{} implements a budgeted semantics (a stated budget of rounds per frame, the redraw gate emulated) on the unmodified virtual machine, runs a project under each knob, and reports a rate-sensitive project with a witness; a fourth knob, \KoneM{}, mutes a sprite's requests without hiding it and so isolates the throttle.

On \nCorpusProjects{} popular public games, \pctProjectsPureLoop{} contain a loop that never draws and never waits. Muting the requests of the sprites that draw changes the state of a played game after ten seconds in \pctDrivenMutedDiverge{} of the games that have one; hiding the same sprites changes \pctDrivenHiddenDiverge{}, and in part different ones, since a hidden sprite also leaves the game; running the game without a renderer and without the gate changes \pctDrivenRawDiverge{}. A random sample of \nRandomProjects{} gives \pctRandomDrivenBDiv{}, \pctRandomDrivenMDiv{} and \pctRandomDrivenRDiv{}; the budget alone moves \pctDrivenFasterDiverge{}. Rules that check for an event within a window of frames rarely reverse; rules that read a position, a score or a clock after a fixed time reverse between pass and fail once the throttle is released, in \pctWhiskerFlipsM{} of Whisker's own example tests and \pctBoatFlipsM{} of a tutorial's checks on \nBoatRemixes{} of its remixes. The emulated gate runs the same rounds as the editor's in \pctFidARoundsAgree{} of frames, and the editor's bundles under the harness's clock and seed agree with it on which games a knob moves (\nWebAgreeM{} of \nWebDenM{} under muting). A tool that grades Scratch programs without a renderer grades a program the editor never runs, unless it emulates the gate and states a budget; we say what a grader should report, what the platform could change, and close with the sentence a child could be taught.
\end{abstract}
\maketitle

\section{Introduction}\label{sec:intro}
Scratch counted 164 million shared projects and 135 million registered users in July 2024~\cite{scratchstats}, most of them children, and for many of them it is the first programming language they meet~\cite{resnick2009scratch,maloney10}. Its blocks are designed to show everything the computer will do~\cite{resnick2009scratch,bau2017learnable}. This paper is about a clock the blocks do not show.

Under the surface a Scratch project is a concurrent program~\cite{maloney10}: a script for the cat, a script for the apple, a script for the score, each run by the virtual machine as a cooperative thread over shared state. What every learner is told about the speed of these threads fits in one sentence. Loops, says the Scratch Wiki, ``by default iterate once per frame''~\cite{wikisingleframe}; at the end of each cycle, a forum regular explains, ``the loop will wait for the screen to get refreshed before running again''~\cite{scratchvm2138}. Thirty frames a second, one iteration a frame, the same on every computer. The tools that run Scratch programs outside the editor~\cite{johnson2016itch,stahlbauer2020verified,stahlbauer19,deiner23whisker,goetz2022model,feldmeier24} describe no other model (Section~\ref{sec:related}).

Unfortunately, the sentence describes the exception. We read the scheduler of the public Scratch~3 virtual machine (\texttt{sequencer.js}, version \vmVersionNote{}) and found that a frame is not one round of the scripts. It is as many rounds as fit before one of three exits: some primitive has requested a redraw (E1), no script is left running (E2), or three quarters of the frame's wall-clock budget, 25\,ms of 33, are spent (E3). A redraw is requested by a visible sprite that moves, turns or changes its looks, by its speech bubble, by the pen, and by a \emph{wait} block. It is never requested by a variable, a list, an operator or a sensing block, and never by a hidden sprite's motion or looks. So a loop that never draws is held to one iteration per frame only while some \emph{other} visible sprite is drawing. When nothing visible changes, it runs as many iterations per frame as the computer manages in 25 milliseconds, tens of thousands on a laptop.

The rule has been in the source since at least 2019. A bug report from that year describes one loop starving the rest of a project, an expert replies that the link between ``refresh'' and loops is ``a fairly common misunderstanding''~\cite{scratchvm2138}, and the report is still open. The Wiki's page on the \emph{forever} block says that the loop waits for the next frame ``provided there are blocks running in the project that require a yield''~\cite{wikiforever}. The Wiki's page on running faster says only that ``some blocks will yield'' and names the sound volume and effect blocks~\cite{wikisingleframe}. We know of no description, on the Wiki, in the documentation or in the literature on Scratch tools (Section~\ref{sec:related}), that says which blocks those are, that a hidden sprite's do not count, or that a wall-clock budget is the third exit. The 500\,ms limit the Wiki documents for \emph{run without screen refresh} is the warp timer, a clock inside one block, not the frame's.

\runin{Running example} Figure~\ref{fig:example} shows the smallest project that has it: a cat that counts (\emph{set n to 0}, then \emph{forever change n by 1}) and an apple that moves (\emph{forever move 1 step, if on edge bounce}). Nothing connects them. In the public editor the cat counts \editorVisibleThirty{} in ten seconds while the apple is visible and \editorHiddenThirty{} once the apple is hidden, \editorHiddenOverVisibleThirty{} times more, from a change to a sprite the cat's script never mentions.

Three properties make this more than a curiosity. The rule is \textbf{invisible}: no block reads or sets the rate, and Scratch's learners, who program bottom-up and by trial until the stage looks right~\cite{meerbaum2011habits}, tune to a rate they cannot name. It is \textbf{global}: the redraw flag is one flag for the whole runtime, so a visible sprite that moves anywhere on the stage ends the frame for every thread in the project. It is \textbf{environmental}: the clock exit is measured in wall-clock time, so the same project counts differently on a fast computer and a slow one, at sixty frames per second, and in a tool that has no screen at all. For a child the consequence is a game that plays differently on a friend's computer and, once a tool grades it, a verdict that depends on the tool rather than on the program.

Three findings surprised us. A slower or faster machine reaches most games not through the throttle but through their own clocks: of the \nRealDifferTwenty{} real-clock games that end elsewhere on a machine twenty times slower, \nRealDifferTwentyThrottled{} never free-run at all (RQ3). Hiding a sprite and silencing its requests are two interventions, not one: of the played games that hiding changes, a third change only because a hidden sprite touches nothing (RQ2). And whether a rubric survives depends on what it reads: rules of the form \emph{whenever A, then B within k frames} reverse \nFlipRevM{} times in \nFlipRules{} verdicts under the throttle, while one rule about a boat's pace flips in \nBoatSpeedFlipsM{} of \nBoatRemixes{} remixes (RQ5).

Our key observation is that the redraw gate is one flag for the whole runtime, which implies that the rate of a non-drawing loop is a function of what else is on the stage and of the machine, and not of the loop. This gives the paper its skeleton: three knobs, each a change to the environment and none to the code. \Kone{} hides the sprite that draws, and a loop elsewhere in the project speeds up by orders of magnitude. A game's timer, physics or scoring then behaves differently on a title screen, on a pause screen, or after a remixer hides a decorative sprite. \Ktwo{} changes the computer, and the same loop counts a different number of times per frame. \Kthree{} removes the renderer, so that no sprite primitive ever requests a redraw and every loop that does not wait free-runs. Hence a tool that executes a Scratch program without a renderer, and does not emulate the gate, runs a program the editor never runs.

\noindent This paper makes the following contributions.
\begin{enumerate}
  \item We state the rule from the virtual machine's source as a model with three exits and a redraw predicate tabulated by primitive, renderer and visibility, and derive the three knobs in it. We show it in the public editor, where hiding one sprite makes another's loop count \editorHiddenOverVisibleThirty{} times higher (Section~\ref{sec:background}).
  \item We describe \tool{}: a budgeted execution semantics (a stated budget of rounds per frame, the redraw gate emulated on the unmodified virtual machine) and a metamorphic check whose fourth knob mutes a sprite's requests without hiding it and so isolates the throttle. It reports a witness for every rate-sensitive project and carries a lint that writes one sentence next to the loops it is about (Section~\ref{sec:tool}).
  \item We measure the rule on \nCorpusProjects{} public games and a random sample of \nRandomProjects{} public projects, in the editor's own runtime on a real clock, and frame by frame against its gate. We also measure what it does to grading: \nFlipRemixes{} tutorial remixes under rules of two kinds, Whisker's own \nWhiskerTests{} example tests run unchanged, and a tutorial's steps over \nBoatRemixes{} of its remixes (Section~\ref{sec:eval}).
  \item We walk one real game and one control through the knobs and say what a learner, a grader and the platform can do: one sentence next to the loop, a third outcome beside pass and fail, and a pacing rule with an opt-out (Section~\ref{sec:cases}).
\end{enumerate}

\begin{figure}[t]\centering
  \begin{minipage}[t]{0.44\columnwidth}\centering\small (a) Cat\\[2pt]\includegraphics[width=0.92\linewidth]{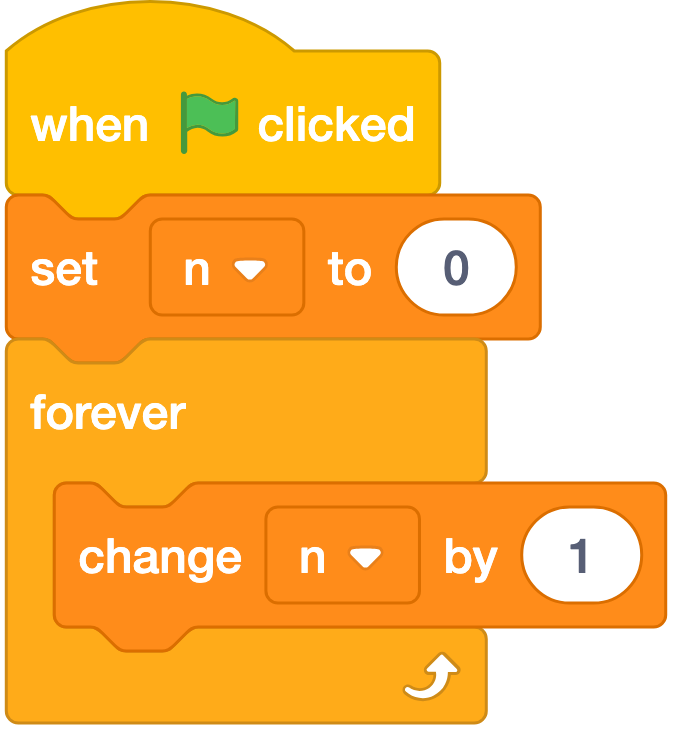}\end{minipage}\hfill
  \begin{minipage}[t]{0.44\columnwidth}\centering\small (b) Apple\\[2pt]\includegraphics[width=0.78\linewidth]{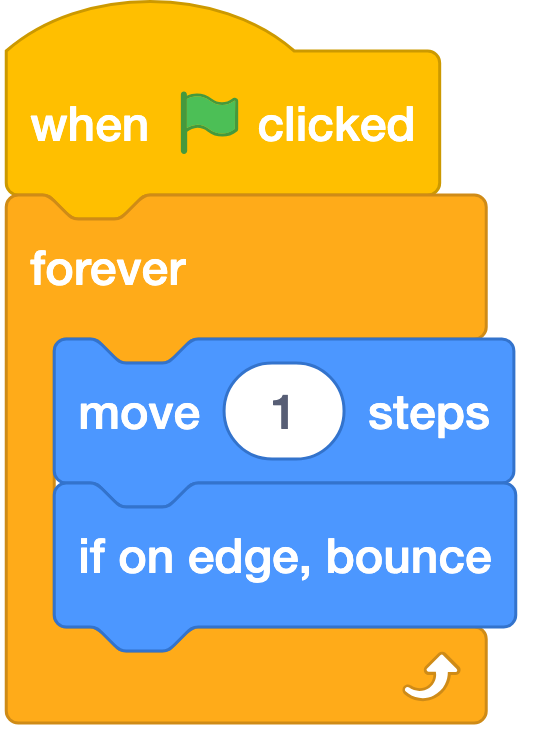}\end{minipage}\\[6pt]
  \small\setlength{\tabcolsep}{5pt}\begin{tabular}{@{}lr@{}}\toprule
  In the editor, after 10 seconds & $n$ \\ \midrule
  Apple visible, 30 frames per second & \editorVisibleThirty{} \\
  Apple visible, 60 frames per second & \editorVisibleSixty{} \\
  Apple visible, CPU 20$\times$ slower & \editorVisibleThirtyCpuTwenty{} \\
  Apple \textbf{hidden}, 30 frames per second & \editorHiddenThirty{} \\
  Apple hidden, CPU 4$\times$ slower & \editorHiddenThirtyCpuFour{} \\
  Apple hidden, CPU 20$\times$ slower & \editorHiddenThirtyCpuTwenty{} \\
  Apple hidden, 60 frames per second & \editorHiddenSixty{} \\
  Apple visible, turbo mode, 60 frames per second & \editorTurboVisibleSixty{} \\
  \bottomrule\end{tabular}
  \caption{The invisible throttle in one project. The cat counts (a) and the apple moves (b); nothing connects them. With the apple visible the cat counts once per frame, whatever the machine; with the apple hidden it counts as fast as the machine allows, about \editorHiddenOverVisibleThirty{} times faster on \editorMachineShort{}, and twenty times slower on a machine twenty times slower. Frame rate moves the first number, machine speed the second; turbo mode moves the first to the second's regime (lower than the hidden case because the apple's motion and edge test then run in every round). Each number is the median of three runs; at 30 frames per second all three give \editorVisibleThirtyMin{} with the apple visible and span \editorHiddenThirtyMin{}--\editorHiddenThirtyMax{} with it hidden.}
  \label{fig:example}
\end{figure}

\section{The Frame and Its Three Exits}\label{sec:background}
The model below is the frame of \texttt{sequencer.js} with names, not a new semantics of Scratch.

\subsection{How Scratch runs a project}\label{sec:vm}
A project is a set of sprites and a stage, each carrying scripts. A script starts with a \emph{hat} (\emph{when green flag clicked}, \emph{when this sprite clicked}, \emph{when I receive}); when the hat fires, the script becomes a \emph{thread}. Threads share everything the project owns: variables and lists, and the position, direction, size, costume and visibility of every sprite. The virtual machine runs them cooperatively, one at a time, and switches between them only where a thread stops: at the end of each iteration of a loop, at a block that parks the thread (\emph{wait}, \emph{wait until}, \emph{glide}, \emph{say for}, \emph{think for}, \emph{ask and wait}, \emph{broadcast and wait}, \emph{play sound until done}), and at the end of the script.

\runin{The frame} Figure~\ref{fig:mechanism} shows the scheduler's frame. Every 33\,ms the runtime calls the sequencer, which steps each running thread until it stops; one pass over all threads is a \emph{round}. The sequencer repeats the round while none of three exits has fired: (E1) some primitive has requested a redraw (turbo mode disables this exit), (E2) no thread is still running, and (E3) three quarters of the frame's wall-clock time have elapsed. The source tests them in the order E2, E3, E1. Then the stage is drawn and the runtime waits for the next frame.

\runin{Who requests a redraw} The request is a flag set by the primitives that change what the screen shows, and Table~\ref{tab:draw} gives the predicate as the source defines it. The eight sprite methods of the table request only if the sprite is visible after the call, so a hidden sprite's motion and looks never do, and neither does a change of layer, which only reorders the drawables. A visible sprite's bubble, a visible clone's creation and deletion, and the pen request as the table says, the pen whether or not its sprite is visible. The \emph{wait} block requests too, once, when it starts, which is why a loop containing a wait paces itself. Of the other parking blocks, \emph{glide} requests as motion does, \emph{say for}, \emph{think for} and \emph{ask} as bubbles do, and \emph{wait until}, \emph{broadcast and wait} and \emph{play sound until done} not at all. A variable's monitor never sets the flag either; the editor redraws it on its own. All of the sprite, bubble and pen requests sit behind a test that a renderer is attached. A virtual machine that runs without one never sees them, and only \emph{wait} still requests.

\begin{table}[t]\centering\footnotesize
\caption{The redraw predicate, read from the virtual machine (\texttt{rendered-target.js} and the looks, pen and control primitives).}\label{tab:draw}
\setlength{\tabcolsep}{4pt}
\begin{tabular}{@{}>{\raggedright\arraybackslash}p{0.46\columnwidth}>{\raggedright\arraybackslash}p{0.50\columnwidth}@{}}\toprule
Primitive & Requests a redraw \\ \midrule
Motion and looks of a sprite: position, direction, show, size, costume, effect, clear effects, rotation style & with a renderer attached, if the sprite is visible after the call \\ \addlinespace[2pt]
Backdrop switch & with a renderer attached; the stage is always visible \\ \addlinespace[2pt]
Speech or thought bubble, non-empty (\emph{ask}'s question included) & with a renderer attached, if the sprite is visible \\ \addlinespace[2pt]
A clone created (below the 300-clone cap) or deleted & with a renderer attached, if the clone is visible \\ \addlinespace[2pt]
Pen down; a pen-down sprite moving; stamp; clear & with a renderer attached, whether or not the sprite is visible \\ \addlinespace[2pt]
\emph{wait}, once, when its timer starts & always, renderer or not \\ \addlinespace[2pt]
Change of layer; pen up; data, operators, sensing, sound, events and the other control blocks & never \\
\bottomrule\end{tabular}
\end{table}

\subsection{The example in the editor}\label{sec:example}
The cat's script (Figure~\ref{fig:example}a) and the apple's (Figure~\ref{fig:example}b) share no variable, no broadcast and no contact. We ran the project for ten seconds in a browser hosting the public bundles with a real clock, the way the editor runs it (Section~\ref{sec:study}). With the apple visible, $n$ reaches \editorVisibleThirty{}: one count per frame. Hide the apple before pressing the flag, and $n$ reaches \editorHiddenThirty{}, about \editorRoundsPerFrameHiddenThirty{} counts per frame. The apple still moves and the cat still counts. The only change is that the apple's motion no longer reaches the screen, so it no longer ends the frame for the cat.

The table in Figure~\ref{fig:example} adds the two other knobs. Throttling the processor twenty times leaves the visible case at \editorVisibleThirtyCpuTwenty{} and cuts the hidden case to \editorHiddenThirtyCpuTwenty{}. Sixty frames per second, as the TurboWarp player allows~\cite{turbowarpfps}, doubles the visible count to \editorVisibleSixty{} and leaves the hidden count within the run-to-run noise of a real clock. The visible cat is paced by the screen and does not see the machine; the hidden cat is paced by the machine and does not see the screen. Hence the rate of a loop is not a property of the loop.

\subsection{The model}\label{sec:model}
A thread is \emph{stepped} by executing its blocks from where it last stopped until it stops again. After its step it is \emph{running} if it stopped at the end of a loop iteration, \emph{parked} at a parking block whose condition is unmet, and \emph{done} if its script ended. A \emph{round} steps, in the sprites' stacking order~\cite{schedcheck26}, every thread that is running or parked (a parked thread re-checks its condition and stays parked or resumes). A thread that a broadcast or a clone starts during a round is appended and stepped in the same round; a running script that a broadcast restarts keeps its place. Only running threads keep a frame going: a frame whose threads are all parked runs one round and exits by E2. A round is a function of the project state, the input, the clock and the random source. Let $\mathit{draw}(r)$ hold when some primitive executed during round $r$ requested a redraw (Table~\ref{tab:draw}).

\begin{definition}[Frame]\label{def:frame}
Under the editor's semantics $\mathcal{E}$, a frame executes rounds $r_1, \ldots, r_n$, where $n \geq 0$ is the least index at which one of three exits fires after $r_n$: \emph{E1}, $\mathit{draw}(r_n)$ holds; \emph{E2}, no thread is running; \emph{E3}, the wall-clock time consumed by $r_1 \ldots r_n$ has reached three quarters of the frame period. A frame with no thread executes no round ($n = 0$); a frame whose threads are all parked executes one round, in which they re-check their conditions, and exits by E2. E1 and E3 cannot fire before $r_1$, since the flag is cleared and the timer started when the frame starts. Turbo mode removes E1, and inside a \emph{warp} block (\emph{run without screen refresh}) a loop's end does not stop the thread, which yields only after 500\,ms of wall time; we model the default mode.
\end{definition}

\begin{proposition}[Throttle and free run]\label{prop:throttle}\label{prop:free}
If some primitive requests a redraw in the first round of a frame, the frame consists of exactly one round, and every thread running at its start and not stopped by another thread in that round is stepped exactly once. If no primitive requests a redraw in any round and some thread is running after every round, the frame consists of as many rounds as the machine starts within three quarters of the period, a number that depends on the machine and on the work the rounds do.
\end{proposition}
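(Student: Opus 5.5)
The proof unfolds Definition~\ref{def:frame} directly, one part of the proposition at a time, using only the exit conditions and the definition of a round.

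For the throttle part, suppose some primitive requests a redraw during $r_1$, so $\mathit{draw}(r_1)$ holds. The frame executes at least one round: a frame with no thread is excluded, since a primitive executed, and E1 and E3 cannot fire before $r_1$. After $r_1$, either E2 or E3 may already have fired, and E1 certainly has, since $\mathit{draw}(r_1)$ holds. Because $n$ is the \emph{least} index at which some exit fires, $n=1$, whatever the order E2, E3, E1 in which the source tests the exits. Next I show that every thread running at the start of the frame and not stopped by another thread is stepped exactly once. Such a thread is running or parked, so the round steps it; it is stepped at least once. A thread appears once in the round's list; threads started during the round by a broadcast or clone are appended as \emph{new} threads, and a running script that a broadcast restarts keeps its place. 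So no thread in the initial list is stepped twice within $r_1$. Since there is no $r_2$, the count is exactly one. The caveat ``not stopped by another thread'' covers a \emph{stop} block or a deletion executed earlier in the round that removes the thread before its turn.

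For the free-run part, suppose $\mathit{draw}(r_k)$ fails for every $k$ and some thread is running after every round. Then E1 never fires, and E2 never fires, so the only exit that can end the frame is E3. The frame therefore ends after the first $n$ such that the wall time of $r_1,\dots,r_n$ reaches $\tfrac34$ of the period. To see that $n$ is finite, I would argue that each round consumes strictly positive wall time: it executes at least one step of a running thread, and a loop iteration ends the step. Consequently, $n$ is the number of rounds the machine starts before the budget is spent. This is exactly the statement: the last round is begun below the threshold and crosses it.

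The dependence on the machine and on the work is then immediate, because $n$ is defined by a wall-clock sum of per-round durations. Those durations are a function of the hardware and of what each round executes, not of the project state alone. Here I would point to Figure~\ref{fig:example} as a witness that $n$ varies.

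I expect the main obstacle to be the finiteness of each step. A thread stepped in a round might not stop if, for instance, it is inside a warp block, or its loop body never reaches the loop's end. Since the definition models the default mode, I would assume a running thread stops at the end of each loop iteration, as stated in the model. Under the warp exclusion a step then terminates, so each round takes positive finite time. That is all the argument needs; precise counting of $n$ is deliberately left as ``depends on the machine''.
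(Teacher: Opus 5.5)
Your proof is correct and takes the paper's route: $\mathit{draw}(r_1)$ makes E1 fire after the first round, so $n=1$, and when E1 and E2 never fire, E3 alone sets $n$. Your argument that each initially running thread is stepped exactly once (each thread has one slot in the round, and appended or restarted threads do not re-step it) spells out what the paper leaves to the definition of a round. The only soft spot is your finiteness remark: a strictly positive duration per round does not by itself force the total to reach the threshold, since you need a uniform positive lower bound per round, which any physical machine supplies. The paper does not argue finiteness at all, so this is a side point.
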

\begin{proof}
In the first case $\mathit{draw}(r_1)$ holds, so $n = 1$. In the second E1 and E2 never fire, so $n$ is set by E3 alone.
\end{proof}

We call the number of rounds of the second case the \emph{free-run count} of that frame on that machine. The first case is the coupling and the second the machine dependence; together they give the three knobs. Below, \emph{muting} a sprite means suppressing its redraw requests while it stays visible and touchable (Section~\ref{sec:tool} says how).

\begin{proposition}[The three knobs]\label{prop:knobs}
Let $t$ be a thread that executes no redraw-requesting primitive and is running throughout the frames considered, and compare two executions that start a frame from the same state and whose first rounds read the same input, clock values, random draws and sensing answers (the later rounds of the longer execution have no counterpart). For \KoneM{}, let $s$ be a sprite whose primitives, its clones' included, make the only redraw requests of the first round. With those requests muted, E1 does not fire after the first round, so $t$ is stepped a second time unless E2 or E3 fires first, and the free-run count times when no primitive requests in the later rounds either. For \Kone{}, hiding $s$ removes its motion, looks and bubble requests but not its pen's (Table~\ref{tab:draw}) and changes what sensing answers, so the clause holds when $s$ draws no pen line and while no read differs. For \Ktwo{}, two budgets, a slower and a faster machine that read the same clock values, run the same rounds in every frame that E1 or E2 ends within the smaller budget, and differ only in the number of rounds of the frames that E3 ends under it. For \Kthree{}, with no renderer attached only \emph{wait} requests a redraw, so in every frame in which no thread reaches a \emph{wait}, $t$ is stepped the free-run count times.
\end{proposition}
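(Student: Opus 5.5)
The proof reduces each knob to Definition~\ref{def:frame} and Proposition~\ref{prop:throttle}, treating them in the order \KoneM{}, \Kone{}, \Ktwo{}, \Kthree{}. First I would fix a coupling of the two executions. Both start the frame from the same state, and their first rounds read the same input, clock, random draws and sensing answers. A round is a function of exactly these arguments, so the first rounds execute the same primitives. They can differ only in whether a request reaches the flag. Every clause below compares what happens after $r_1$ under this coupling. Later rounds are compared only where both executions have them.

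For \KoneM{}, the requests of $s$ and its clones are the only ones made in $r_1$. Muting them therefore leaves $\mathit{draw}(r_1)$ false, while all other effects of $r_1$ are unchanged. Hence E1 does not fire after $r_1$. The thread $t$ is running throughout, so it is running after $r_1$. A second round is then executed unless E2 or E3 fires, and that round steps $t$ again.

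If no request occurs in any later round, we have the hypotheses of the second case of Proposition~\ref{prop:free}. E1 never fires, and E2 never fires because $t$ keeps running, so E3 alone ends the frame. The number of rounds is then the free-run count. Since $t$ is running in each of these rounds, it is stepped that many times.

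For \Kone{}, I would read Table~\ref{tab:draw} row by row. Hiding $s$ falsifies the ``visible after the call'' condition for motion, looks, bubbles and clone requests, but not for the pen row. If $s$ draws no pen line, hiding therefore removes exactly the requests that muting removes. Hiding also changes what \emph{touching} and similar sensing reads return. The coupling assumption holds only while no such read differs. Under that proviso the \Kone{} case reduces to the \KoneM{} case.

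For \Ktwo{}, the argument is an induction on rounds within a frame. The two machines read the same clock values, so their round sequences coincide as long as neither has exited. Take a frame that E1 or E2 ends after round $n$, with the elapsed time still below the smaller budget. E3 fires on neither machine before round $n$, so both run $r_1,\dots,r_n$ and exit at the same point. In frames that E3 ends under the smaller budget, the machines agree on a common prefix of rounds and differ only in the number of rounds.

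For \Kthree{}, every row of the table except \emph{wait} is guarded by ``with a renderer attached''. With no renderer, $\mathit{draw}(r)$ therefore holds only if some thread reaches a \emph{wait} in $r$. In a frame where no thread does, E1 never fires and $t$ keeps E2 from firing, so by Proposition~\ref{prop:free} $t$ is stepped the free-run count times.

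I expect the main obstacle to be the \Ktwo{} clause. ``Same clock values'' must be made precise for two machines whose wall-clock times differ. The E3 test reads a different timer from the one the blocks see, so I would separate the two clocks. The comparison then needs the faster machine's exit time to be at most the slower one's, so that an E1 or E2 exit within the smaller budget is reached by both before either budget expires.
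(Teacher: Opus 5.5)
Your proposal is correct and follows the paper's proof knob by knob. Muting leaves $\mathit{draw}(r_1)$ false, so E1 cannot fire before $r_2$, and Proposition~\ref{prop:free} takes over when the later rounds stay silent; under \Ktwo{} the machine enters Definition~\ref{def:frame} only through E3; under \Kthree{} only \emph{wait} escapes the renderer test; and your row-by-row reduction of \Kone{} to \KoneM{} spells out what the paper leaves to the statement. The \Ktwo{} obstacle you expect does not arise, because the paper conditions on a round $n$ that both machines start (which is all ``within the smaller budget'' means), so the two machines' timers never need to be compared. One caveat in your \Kone{} step, which the statement itself also passes over: a \emph{show} run by $s$ or a clone in the first round still requests a redraw when $s$ is hidden, since the sprite is visible after the call, so ``exactly the requests that muting removes'' needs that exclusion.
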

\begin{proof}
For \KoneM{}, with the requests of $s$ muted the first round makes no request, so E1 does not fire before $r_2$. Proposition~\ref{prop:throttle} applies to the later rounds when $\mathit{draw}$ stays false and some thread stays running, and only then: a sprite that moves once a counter reaches a value can request in the second round. For \Ktwo{}, the machine enters Definition~\ref{def:frame} only through E3: if E1 or E2 fires after some round $n$ that both machines start, both run exactly $r_1 \ldots r_n$, and otherwise each runs as many rounds as it starts. For \Kthree{}, every requester but \emph{wait} is behind the renderer test (Table~\ref{tab:draw}), so E1 fires only where a \emph{wait} starts.
\end{proof}

The proposition is about one frame; the study applies the knobs over 300 frames, where a released loop also consumes the random stream faster, which Section~\ref{sec:threats} bounds.

\begin{definition}[Budgeted semantics $\mathcal{C}_B$]\label{def:canonical}
For a budget $B \geq 1$ and a seed, a frame executes rounds $r_1, \ldots, r_n$ where $n \geq 0$ is the least index at which E1 or E2 fires after $r_n$, or $n = B$; a frame with no thread executes no round and an all-parked frame one, as under $\mathcal{E}$. $\mathit{draw}$ is evaluated as in $\mathcal{E}$ with a renderer attached. The clock read by the \emph{timer}, \emph{days since 2000} and \emph{current time} blocks advances one frame period per frame and is constant within a frame, and the timers behind \emph{wait}, \emph{glide}, \emph{say for} and the other parking blocks read the same clock. Random draws come from a generator started at the seed. A \emph{warp} block (\emph{run without screen refresh}) yields after 500 checks of its limit rather than after 500\,ms, a count independent of $B$.
\end{definition}

\begin{proposition}[Determinism and agreement]\label{prop:agree}\label{prop:canonical}
Under $\mathcal{C}_B$ the state after every frame is a function of the project, its input and the seed, and the same on every machine that runs the same JavaScript engine (floating-point arithmetic being the engine's). Consider a frame that $\mathcal{E}$ and $\mathcal{C}_B$ start from the same state with \emph{aligned reads}: the rounds read the same input, clock values and random draws, receive the same answers from sensing, and yield from \emph{warp} at the same points. If $\mathcal{E}$ ends the frame by E1 or E2 after $n \leq B$ rounds, $\mathcal{C}_B$ runs the same rounds and ends by the same exit. Hence a project every frame of which $\mathcal{E}$ ends by E1 or E2 within $B$ rounds reaches, with aligned reads throughout, the same states under $\mathcal{E}$ on any machine and under $\mathcal{C}_B$.
\end{proposition}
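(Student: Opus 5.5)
The plan has three parts: determinism under $\mathcal{C}_B$, single-frame agreement by induction on rounds, and an induction over frames.

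\emph{Determinism.} I would go through Definition~\ref{def:canonical} and check that every input to a round is fixed by the project, its input and the seed. The section on the model says a round is a function of the project state, the input, the clock and the random source. Under $\mathcal{C}_B$ the clock is the frame counter times the period, constant within a frame. The random source is the seeded generator. The parking timers read that same clock, and a warp block yields after a count of 500 checks, not after wall time. The exits E1 and E2 are predicates on the state and on what the round executed, and the cut-off $n = B$ is a constant. So nothing in the number of rounds depends on the machine, and by induction on frames the state after each frame is a function of the project, input and seed. Floating point is the one remaining machine dependence, which is why the statement is relativised to one JavaScript engine.

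\emph{Agreement within one frame.} I would argue by induction on $k \le n$ that after round $r_k$ both semantics are in the same state and have made the same exit decision. The base case is a common start state. The semantics agree on the rule for a frame with no thread and for an all-parked frame, and in each case E1 and E3 cannot fire before $r_1$.

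For the step, aligned reads make round $r_k$ execute the same primitives in both semantics, so it yields the same state and the same value of $\mathit{draw}(r_k)$. Here the clause ``with a renderer attached'' in Definition~\ref{def:canonical} matters, because it makes $\mathcal{C}_B$ evaluate $\mathit{draw}$ exactly as $\mathcal{E}$ does.

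Then I compare the exits.
\begin{itemize}
\item For $k<n$, $\mathcal{E}$ does not exit after $r_k$, so E1 and E2 are false there. Since $k<n\le B$, the budget cut-off is also false, and $\mathcal{C}_B$ continues too.
\item At $k=n$, $\mathcal{E}$ exits by E1 or E2, which are state-determined. So the same exit fires in $\mathcal{C}_B$, whether or not $n=B$.
\end{itemize}
The same argument as Proposition~\ref{prop:knobs}'s \Ktwo{} case applies here: the machine and the budget enter only through E3 and the cap $B$, and neither binds before round $n$.

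\emph{Whole runs.} I would induct on frames. Each frame starts from equal states, and by hypothesis $\mathcal{E}$ ends it by E1 or E2 within $B$ rounds, so the single-frame lemma gives equal end states. Because this holds for $\mathcal{E}$ on any machine, all machines agree with $\mathcal{C}_B$ and hence with each other.

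The main obstacle is making ``aligned reads'' carry enough weight in the inductive step. Round $r_k$ must really be a function of the state and the aligned reads alone. That means checking the order of thread stepping, threads appended by broadcasts or clones, and the parking conditions, which read timers. The within-frame clock must also not leak in through a path the alignment hypothesis misses. I would handle this by folding every wall-clock-dependent read, including warp yield points and timer comparisons, into the alignment hypothesis explicitly, and noting that the only remaining wall-clock use in $\mathcal{E}$ is E3, which the hypothesis keeps from binding.
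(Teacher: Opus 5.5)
Your proposal is correct and follows the paper's proof essentially step for step. Determinism is by induction on frames, since each round is a function of the state, the input, the frame-constant clock and the generator's state. Agreement is by induction on rounds $k \le n$: $\mathcal{C}_B$'s budget cut-off cannot fire because $k \le n \le B$, and the E1 or E2 exit after $r_n$ fires identically in both. Your explicit induction over frames for the final ``Hence'' clause, and your remark that $\mathcal{C}_B$ must evaluate $\mathit{draw}$ with a renderer attached, are details the paper leaves implicit rather than a different route.
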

\begin{proof}
Determinism is by induction on frames: each round is a function of the state, the input, the clock (constant within the frame) and the generator's state, and the exits are functions of the state after the round or of the count. Agreement is by induction on the rounds: before round $k \leq n$ neither semantics has fired an exit ($\mathcal{C}_B$'s third fires only when $k - 1 = B$, which $k \leq n \leq B$ excludes), so round $k$ is the same under both, and after $r_n$ the same exit fires under both.
\end{proof}

Once their reads are aligned, the two semantics can disagree only in frames that the clock ends, and only in how many rounds run: that is what a stated $B$ standardizes. The alignment is a real restriction. Two machines do not read the same clock values, and Section~\ref{sec:eval} finds that the machine reaches most of the games it changes through their \emph{timer}, \emph{wait} and \emph{glide} blocks, a channel that only the virtual clock of $\mathcal{C}_B$ removes. Sensing answers differ between the harness and the editor in \nFidASensing{} of \nFidAGames{} games within 300 frames (Table~\ref{tab:fidelity}). Determinism makes the corpus study possible, since divergence under a knob is then a property of the project. It also says what we ask of a tool that runs Scratch programs without a renderer: implement $\mathit{draw}$ as the editor does and state its $B$.

\begin{figure*}[t]\centering
\begin{tikzpicture}[x=1mm,y=1mm,font=\small,
  show background rectangle, background rectangle/.style={fill=black!4,draw=black!15,rounded corners=6pt}, inner frame sep=3mm,
  box/.style={draw,rounded corners=2pt,align=center,inner sep=3pt,minimum height=10mm,fill=white},
  test/.style={draw,rounded corners=2pt,align=center,inner sep=3pt,minimum height=10mm,fill=black!10},
  arr/.style={-{Stealth[length=2mm]},thick},
  lab/.style={font=\footnotesize,fill=white,inner sep=1.5pt}]
  \node[box,text width=17mm] (frame) at (9,0) {\textbf{frame begins}\\every 33\,ms};
  \node[box,text width=33mm] (round) at (43,0) {\textbf{one round}: step every\\running script until it stops\\(a loop stops at its end)};
  \node[test,text width=26mm] (idle) at (79,0) {\textbf{E2}\quad no script\\still running?};
  \node[test,text width=29mm] (budget) at (114,0) {\textbf{E3}\quad three quarters of\\the frame's time spent?};
  \node[test,text width=29mm] (redraw) at (150,0) {\textbf{E1}\quad some primitive\\requested a redraw?};
  \node[box,text width=34mm] (draw) at (114,-21) {\textbf{draw the stage}\\wait for the next frame};
  \draw[arr] (frame) -- (round);
  \draw[arr] (round) -- (idle);
  \draw[arr] (idle) -- node[lab,above]{no} (budget);
  \draw[arr] (budget) -- node[lab,above]{no} (redraw);
  \draw[arr] (redraw.east) -- ++(4,0) |- (43,11) -- (round.north);
  \node[lab] at ([shift={(2,3)}]redraw.east) {no};
  \draw[arr] (idle.south) |- node[lab,left,pos=0.25]{yes} (draw.west);
  \draw[arr] (budget.south) -- node[lab,right,pos=0.45]{yes} (draw.north);
  \draw[arr] (redraw.south) |- node[lab,right,pos=0.25]{yes} (draw.east);
  \node[anchor=north,text width=158mm,inner sep=0pt] at (81,-28) {\footnotesize\begin{tabular}{@{}p{76mm}@{\hspace{6mm}}p{76mm}@{}}
\textbf{Requests a redraw (E1):} a \emph{visible} sprite moving, turning, showing, or changing size, costume, effect or rotation style; a backdrop switch; a visible sprite's speech bubble; a visible clone created or deleted; the pen, whether or not its sprite is visible; a \emph{wait} block, renderer or not. &
\textbf{Never requests one:} variables, lists, operators, sensing, sound and the other control blocks; a change of layer; a hidden sprite's motion, looks and bubble; and, when no renderer is attached, every sprite, bubble and pen (Table~\ref{tab:draw}).
\end{tabular}};
\end{tikzpicture}
\caption{The per-frame work loop of the Scratch virtual machine (\texttt{sequencer.js}). Rounds repeat until no script is running (E2), the wall-clock budget is spent (E3) or a redraw was requested (E1), so one visible sprite's motion anywhere on the stage ends the frame for every loop in the project.}
\label{fig:mechanism}
\end{figure*}

\section{\tool{}}\label{sec:tool}
\tool{} implements the budgeted semantics on the unmodified virtual machine, turns the knobs into perturbations, reports a witness when a knob changes a project, and annotates the loops.

\runin{A budgeted runtime} The harness loads a project into \texttt{scratch-vm} with no renderer attached and steps it frame by frame under its own control, with three substitutions. First, \emph{a virtual clock and a seed}: the \emph{timer}, \emph{days since 2000} and \emph{current time} blocks read a clock that advances one frame period per frame, and the random source is seeded (seed 1 throughout). No audio engine is attached, so \emph{play sound until done} returns at once, and \emph{ask and wait} parks its thread until the driver answers: no block resolves on the host's clock. Second, \emph{a round budget}: the sequencer's own timer advances a fixed $\delta$ per read. The frame starts with one read and every exit check makes one, so a frame that nothing else ends runs $\lceil 25\,\mathrm{ms}/\delta \rceil - 1$ rounds, \harnessRoundsCap{} at the default $\delta = 1$\,ms and \harnessRoundsFast{} at $0.25$\,ms. A \emph{warp} block yields after 500 checks. Third, \emph{an emulated redraw gate}: we wrap the eight methods of Table~\ref{tab:draw} (the costume method on the stage is the backdrop switch) so that they request a redraw under the same visibility condition. We do the same for a visible sprite's non-empty bubble, a visible clone's creation and deletion, and the pen on pen down, on the motion of a pen-down sprite, on stamp and on clear (Table~\ref{tab:draw}); \emph{wait} requests natively. Section~\ref{sec:eval} checks this emulation against the editor's own gate frame by frame. Under this semantics the example project counts \harnessVisible{} in 300 frames with the apple visible and \harnessHidden{} with it hidden, on every machine (Proposition~\ref{prop:agree}). The renderer is not only the gate; sensing reads it. The harness answers sensing with a geometry of bounding boxes: \emph{touching} is an overlap of boxes, a click hits the topmost visible sprite whose box contains the point, and the colour predicates are answered false. The answers are the same under every knob but hiding, which changes what a touch finds, so a knob changes nothing but the gate, the budget and the visibility of the chosen sprites. Against the editor the geometry is a second difference, which Section~\ref{sec:threats} keeps apart from the gate.

\runin{The knobs as perturbations} Each knob is a run that differs from the reference in one setting, applied before the green flag and never to the program's blocks. Under \Kone{}, \emph{drawing sprites hidden}, every visible sprite whose scripts contain a motion, looks or pen block is hidden before the flag, as a \emph{hide} block would hide it. Its motion, looks and bubble stop requesting, its pen keeps requesting, a script that runs \emph{show} re-shows it (in \nDrivenHiddenReshown{} of the \nDrivenHiddenDen{} played games some hidden sprite is visible again by frame 300), and a clone inherits its visibility. The stage, which cannot be hidden, is neither hidden nor muted, so a loop that switches the backdrop (\nGamesBackdropLoop{} games have one) keeps requesting under both knobs. Hiding has two effects in Scratch, the release of the loops the sprite paced and the ordinary semantics of a hidden sprite, which neither touches nor is clicked. Hence we also run \KoneM{}, \emph{drawing sprites muted}: the same sprites stay visible, touchable and clickable, and only their redraw requests (methods, bubble and pen, for the sprite and its clones) are suppressed. \KoneM{} isolates the throttle. Under \Ktwo{}, \emph{larger budget}, a frame may run \harnessRoundsFast{} rounds instead of \harnessRoundsCap{}; by Proposition~\ref{prop:throttle} this stands for a faster machine only in the frames that run to the budget, so RQ3 measures the machine itself as well. Under \Kthree{}, \emph{renderer-less tool}, the redraw emulation is switched off, so the harness behaves like an execution of the virtual machine with no renderer attached, in which only \emph{wait} requests a redraw.

\runin{The check and its witness} The state after a frame is every variable of the stage and the sprites and, for every sprite and clone, hidden ones included, its position, direction, size, costume and visibility. For \Kone{} and \KoneM{} the hidden or muted sprites' own state is left out of both sides, so that the two rows of Table~\ref{tab:diverge} compare the same state; RQ4 reports \Kthree{} under the same projection as well. A project is \emph{rate-sensitive under a knob} when, for the same input and seed, its end state under the knob differs from its end state under the reference run. The property relates two runs of the same program, a hyperproperty~\cite{clarkson2008hyper}, and the end-state comparison is one of ScratchLens's observation lenses~\cite{si26lens}; the criterion is deliberately literal, and Section~\ref{sec:eval} reports which kinds of state moved and which a player could see. The check runs the reference and the knob, one child process each, and reports the first frame at which they differ, the variables and sprites that differ there, and the sprites it hid or muted. A project it calls \emph{unmoved} differs in no frame, so it is not rate-sensitive; the converse fails, since two runs can part and meet again by frame 300, and the sweeps count the end state only.

\runin{The lint} By Definition~\ref{def:frame} a loop is stepped more than once in a frame only when nothing requests a redraw and something is running. A loop whose body contains no primitive that can request a redraw (motion, looks, pen, a clone's creation) and no block that parks the thread therefore free-runs whenever nothing else requests, whatever its own sprite is doing. The exception is a loop inside a custom block that runs without screen refresh, where the warp timer paces it by a documented rule. The predicate follows a call to a custom block into the block's definition and skips the loops inside such blocks. \tool{} flags every loop that meets the condition, in the tradition of the pattern linters for Scratch~\cite{boe13hairball,fraser21litterbox,obermueller2021catnip}, by writing a comment next to it into a copy of the project file, which the unmodified editor shows (Figure~\ref{fig:lint}). RQ1 reports the flag's precision and recall against the runs. One might wonder why the comment does not simply tell the child to add a \emph{wait}. For the loops that write their count into a variable, the count is the program's meaning and a \emph{wait} would change it. Hence the comment carries the one sentence of Section~\ref{sec:cases}, as a question and without an instruction. A 300-frame run takes about two seconds on the laptop of Section~\ref{sec:study}.

\begin{figure}[t]\centering
\includegraphics[width=0.66\columnwidth]{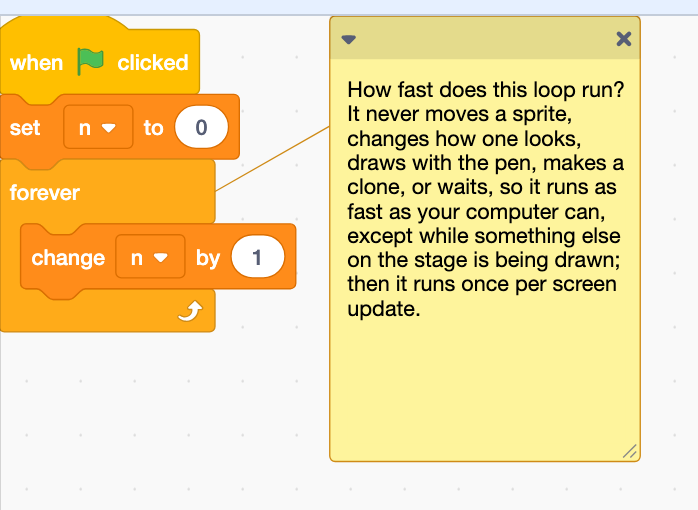}
\caption{The lint as the editor shows it (the script area of the public Scratch editor, with the lint's copy of the project loaded). \tool{} writes a comment next to every loop whose body can neither request a redraw nor park the thread, into a copy of the project file; the unmodified editor shows the comment when the copy is opened. The comment is the one sentence of Section~\ref{sec:cases}, as a question and with no instruction. Here the cat's counting loop of Figure~\ref{fig:example} is flagged; the apple's loop, which moves the sprite, is not.}
\label{fig:lint}
\end{figure}

\section{Evaluation}\label{sec:eval}

\subsection{Setup}\label{sec:study}
\runin{Research questions} \textbf{RQ1}: how much code can run at the speed of the screen, and does a static flag find it? \textbf{RQ2}: how often does the throttle change what a project does when the sprites that draw stop requesting redraws (\KoneM{}), how does hiding them differ (\Kone{}), and what changes? \textbf{RQ3}: how often does the machine change it (\Ktwo{}, and the editor's own runtime on a real clock), and is the emulated gate the editor's, frame by frame? \textbf{RQ4}: how often does a renderer-less execution (\Kthree{}) change it? \textbf{RQ5}: does any of this reach a grader's verdict?

\runin{Corpora} All populations are public. \emph{Games} are \nCorpusProjects{} public Scratch~3 projects fetched in September 2026 from the public search endpoint. We took the first 60 results in popularity order for each of twelve genre queries and the first 150 in trending order for \emph{game}, de-duplicated them, skipped projects over 15\,MB, and cut the list at 500. To see whether the numbers are the games' alone, a \emph{random sample} draws project ids uniformly between 1 and \randomMaxId{}, the largest id in September 2026, and keeps every id that names a shared Scratch~3 project with a downloadable file. That gave \nRandomOk{} projects from \nRandomAttempts{} ids (the rest named no shared project, were Scratch~2 files or failed to download), weighted by creation time and not by popularity. \emph{Tutorial remixes} are every public remix of two Scratch-team tutorials, \emph{Catch the Fish} (\nFlipRemixesFish{}) and \emph{Make It Fly} (\nFlipRemixesFly{}). A rule checker grades each remix, giving each rule three verdicts, pass, fail and untested (the rule's trigger never occurred, the vacuity of model checking~\cite{beer1997vacuity,kupferman2003vacuity}), and driving the trigger with probes a player could give. Because rules we wrote cannot show what rules written by others do, RQ5 adds \emph{Whisker's own example suites}, the \nWhiskerSuites{} projects and \nWhiskerTests{} tests that ship with Whisker~\cite{stahlbauer19}, and the \emph{Boat Race} tutorial of the Raspberry Pi Foundation~\cite{rpiboatrace}. Of the tutorial's 31,753 public remixes we fetched the first \nBoatRemixes{} in the repository's listing order that are Scratch~3 files.

\runin{Runs} A static pass classifies every loop: one with no block that can request a redraw and none that parks the thread is \emph{never-drawing} (the lint's predicate), and a never-drawing loop that changes a variable unconditionally on every iteration is a \emph{free-running counter}. We then run every game and every sampled project under the reference semantics and under each of the four knobs, in two runs each. \emph{Idle} is the green flag and nothing else. \emph{Played} holds, from the fifteenth frame on, the keys the project listens to (the right or left arrow, the up arrow or \emph{w}, and the space bar), so that \nDrivenWithKeys{} of the \nDrivenMeasured{} played games receive an input at all. Each run lasts 300 frames at the default budget of \harnessRoundsCap{} rounds per frame with seed 1, in its own child process on the laptop named below. A run \emph{free-runs} when it averages more than one round per frame over its 300 frames (\nIdleFree{} idle games do; \nIdleFreeAboveTwo{} average more than two rounds and \nIdleFreeAboveTen{} more than ten). A run is cut off by a step cap of five million block steps, a 150-second load-and-run watchdog, or an exception in the virtual machine. A project is excluded from a cell of Table~\ref{tab:diverge} when its reference run or that knob's run was cut off. In the final sweeps the reference run failed at load for \nGamesRefCut{} game and \nRandomRefCut{} sampled projects, all on an extension that needs a browser, and one sampled project's renderer-less run hit the step cap. The hidden and muted rows count the projects with a visible drawing sprite to hide or mute: \nGamesNoSprite{} games and \nRandomNoSprite{} sampled projects have none. For RQ3 the \nBudgetSubset{} idle games that free-run in the reference run are also run at budgets of 6, 99, 399 and 1,599 rounds per frame, with the witness at 99 against 24.

\runin{The editor on a real clock} The real-clock numbers come from the public \texttt{scratch-vm} (5.0.300) and \texttt{scratch-render} (2.2.84) bundles in headless Chromium 153 with nothing virtualised. Here \texttt{vm.start()} runs the real timers, and we read the state (every variable and, for the visible sprites, the rounded position, direction, size and costume) after ten seconds of wall-clock time on \editorMachineShort{}, with processor throttling from the browser's own emulation. The protocol repeats, as measurement studies recommend~\cite{mytkowicz2009wrong}: one warm-up run discarded, then three measured runs per game and rate, the project reloaded and \texttt{Math.random} re-seeded before each. The subset is \nRealProjects{} of the played games (every eighth of the id-sorted list, cut at \nRealProjects{}). A game is \emph{stable} at a rate when its three runs agree exactly, \emph{free-runs} at $1\times$ when some run steps a loop more than once per frame on average, and \emph{differs between rates} when every run at $20\times$ differs from every run at $1\times$.

\runin{The emulated gate against the editor's, frame by frame} A page hosts the editor's own bundles with the real renderer attached, installs the harness's seeded generator, virtual clock, sequencer clock, \emph{warp} count and knobs, and drives the frames itself. The two runtimes then differ only in the gate (real against emulated) and in sensing (pixels against boxes). \Kthree{} in the browser suppresses every request but \emph{wait}'s, as the harness does, and keeps the renderer for sensing. For each of the \nRealProjects{} games and each knob we run 300 frames in both, with the played keys, and compare every frame's number of rounds and its state; Table~\ref{tab:fidelity} defines the classes.

\runin{Rule verdicts} The remixes are checked under the reference semantics and under each knob with the checker's code untouched, since the knobs are the runtime's environment variables. Each remix gets its own tutorial's rules. The \emph{trigger-response} rules are the tutorials' instructions as \emph{whenever A, then B within k frames}: a clicked fish increases the score (cf1) and disappears (cf2) within five frames and a full score makes the game react (cf3). The up and down arrows move the cat within ten frames (mf1, mf2), and touching the buildings makes something happen within a second (mf3). The \emph{rate-reading} rules, written from the same instructions after the trigger-response results were in, read a value after a fixed time: the fish score after ten seconds of clicking (cfr1--cfr3), how far the buildings move in the first second (mfr1, mfr2), and how far a falling cat drops (mfr3). Whisker's suites run unchanged through a driver for Whisker's JavaScript API on the browser-hosted budgeted runtime with the real renderer, each test from a freshly loaded project. Whisker's own runner agrees with the driver on \nWhiskerNativeAgreeA{} of \nWhiskerNativeTests{} tests (Section~\ref{sec:threats}). The Boat Race suite is the tutorial's nine steps in the same API, with the tutorial's texts and our tolerances. Rules 1 to 3 read the boat's position and costume a tenth of a second after the flag, whether it follows the mouse, and its pace (15 to 150 pixels a second, the tutorial's one step per frame being 30). On touching the barrier, rules 4 to 6 expect a bubble, a costume change and, half a second later, the position back at the start; rules 7 to 9 expect a bubble on reaching the beach, the timer after three seconds, and the gate turning. By what the check reads, rules 1, 3, 6 and 8 read a position, a rate or a clock at a fixed time and rules 2, 4, 5, 7 and 9 check for an event within a window. The split was fixed when the suite was written. A rule whose trigger the driver cannot reach in a remix counts as untested. A \emph{flip} is a verdict under a knob that differs from the verdict under the reference, and a \emph{reversal} is a flip between pass and fail.

Every number below is a count over a population, regenerated from the result files by script, with no inferential statistics. Percentages are rounded to integers, except shares below one percent and shares over more than 2,000 frames or loops, which carry one decimal. The projects are public under the Scratch community licence; the study records project ids and no other user data, and no person took part in it.

\subsection{RQ1: How much code can run at the speed of the screen?}
Of \nLoops{} loops in the games, \nPureLoops{} (\pctPureLoops{}) contain no block that can request a redraw and no block that parks the thread: no motion, no looks, no pen, no clone, no wait, in the loop or in a custom block it calls. Of these, \nPureLoopsInWarp{} sit inside a custom block that runs without screen refresh, where the warp timer paces them by a documented rule, and are counted apart. The other \nPureLoopsOutsideWarp{} occur in \nProjectsPureLoopOutsideWarp{} projects (\pctProjectsPureLoopOutsideWarp{}), and in \nProjectsFreeLoopOutsideWarp{} projects (\pctProjectsFreeLoopOutsideWarp{}) at least one of them is a \emph{free-running counter}, so that the number of rounds per frame is written directly into the program's state. These loops are not exotic: by a keyword heuristic on their blocks and variable names, \nKindtimercounter{} are timers and counters, \nKindphysics{} compute speed, velocity or gravity, \nKindscorelives{} keep score or lives, \nKindlistwork{} work on lists, \nKindinputpolling{} poll input, \nKindrandomAI{} draw random numbers, and \nKindothercomputation{} do other computation. In the random sample, \nRandomProjectsPureLoopOutsideWarp{} of \nRandomProjects{} projects (\pctRandomProjectsPureLoopOutsideWarp{}) contain such a loop outside a warp block and \nRandomProjectsFreeLoopOutsideWarp{} (\pctRandomProjectsFreeLoopOutsideWarp{}) a free-running counter: the games are the busier population, and the phenomenon is not theirs alone.

The saved idle run says which of these loops are throttled. Of the \nIdleMeasured{} measured games, \nIdleRoundsOne{} step every loop once per frame for the whole ten seconds, \nIdleRoundsAtBudget{} run at the full budget throughout (a title screen waiting for a click), and the rest switch between the regimes. Does the static flag predict this? Against the idle run, the lint's confusion matrix is \nLintFlagFree{} flagged and free-running, \nLintFlagStill{} flagged and throttled, \nLintUnflagFree{} unflagged and free-running, \nLintUnflagStill{} neither. That is a recall of \pctLintRecall{} and a precision of \pctLintPrecision{}, against \pctIdleLintBaseline{} for flagging every project: at the level of projects the flag is no better than flagging everything. Per loop, where it is placed, we count a thread's steps, which the trace keys by the script's top block, so loops inside custom blocks are left out. A loop without a parking block iterates once per step, so it free-ran when its thread was stepped more than 300 times in the idle run. Of the \nPerLoopFlagged{} flagged loops outside custom blocks, \nPerLoopFlaggedFree{} free-ran (\pctPerLoopPrecision{}, and \pctPerLoopPrecisionEntered{} of the \nPerLoopFlaggedEntered{} that ran at all), against \pctPerLoopBase{} of all \nPerLoopNonParking{} loops without a parking block outside custom blocks; the flag finds \pctPerLoopRecall{} of the loops that free-ran. Both gaps have one cause, visibility is not in the code: a project free-runs without the flag when the only drawing block in its loops belongs to a hidden sprite or sits under a condition that stays false, and a flagged project stays throttled while any other visible sprite moves.

\takeaway{Answer to RQ1: more than half of the popular public games contain a loop whose speed is set from outside it, and about one in four writes that speed into a variable on every iteration; in the random sample the shares are \pctRandomProjectsPureLoopOutsideWarp{} and \pctRandomProjectsFreeLoopOutsideWarp{}. In the saved idle run, one game in ten runs every loop at the full budget for the whole ten seconds. The static flag finds \pctLintRecall{} of the projects that free-run and is right in \pctLintPrecision{} of the projects it flags, because visibility is not in the code.}

\begin{table}[b]\centering\footnotesize
\caption{RQ2--RQ4: projects whose state after ten seconds differs from the reference run's (\harnessRoundsCap{} rounds per frame, redraw gate emulated). Each cell is divergent / measured; the hidden and muted rows count the projects that have a visible drawing sprite.}
\label{tab:diverge}
\setlength{\tabcolsep}{4pt}\begin{tabular}{@{}lrr@{}}
\toprule
 & Idle & Played \\
\midrule
\multicolumn{3}{@{}l}{\emph{Games (\nCorpusProjects{})}} \\
Free-running (reference run) & \nIdleFree{} / \nIdleMeasured{} & \nDrivenFree{} / \nDrivenMeasured{} \\
Drawing sprites hidden (\Kone{}) & \nIdleHiddenDiverge{} / \nIdleHiddenDen{} (\pctIdleHiddenDiverge{}) & \nDrivenHiddenDiverge{} / \nDrivenHiddenDen{} (\pctDrivenHiddenDiverge{}) \\
Drawing sprites muted (\KoneM{}) & \nIdleMutedDiverge{} / \nIdleMutedDen{} (\pctIdleMutedDiverge{}) & \nDrivenMutedDiverge{} / \nDrivenMutedDen{} (\pctDrivenMutedDiverge{}) \\
Budget \harnessRoundsFast{} (\Ktwo{}) & \nIdleFasterDiverge{} / \nIdleFasterDen{} (\pctIdleFasterDiverge{}) & \nDrivenFasterDiverge{} / \nDrivenFasterDen{} (\pctDrivenFasterDiverge{}) \\
Renderer-less tool (\Kthree{}) & \nIdleRawDiverge{} / \nIdleRawDen{} (\pctIdleRawDiverge{}) & \nDrivenRawDiverge{} / \nDrivenRawDen{} (\pctDrivenRawDiverge{}) \\
\midrule
\multicolumn{3}{@{}l}{\emph{Random sample (\nRandomProjects{})}} \\
Free-running (reference run) & \nRandomIdleFree{} / \nRandomIdleMeasured{} & \nRandomDrivenFree{} / \nRandomDrivenMeasured{} \\
Drawing sprites hidden (\Kone{}) & \nRandomIdleBDiv{} / \nRandomIdleBDen{} (\pctRandomIdleBDiv{}) & \nRandomDrivenBDiv{} / \nRandomDrivenBDen{} (\pctRandomDrivenBDiv{}) \\
Drawing sprites muted (\KoneM{}) & \nRandomIdleMDiv{} / \nRandomIdleMDen{} (\pctRandomIdleMDiv{}) & \nRandomDrivenMDiv{} / \nRandomDrivenMDen{} (\pctRandomDrivenMDiv{}) \\
Budget \harnessRoundsFast{} (\Ktwo{}) & \nRandomIdleCDiv{} / \nRandomIdleCDen{} (\pctRandomIdleCDiv{}) & \nRandomDrivenCDiv{} / \nRandomDrivenCDen{} (\pctRandomDrivenCDiv{}) \\
Renderer-less tool (\Kthree{}) & \nRandomIdleRDiv{} / \nRandomIdleRDen{} (\pctRandomIdleRDiv{}) & \nRandomDrivenRDiv{} / \nRandomDrivenRDen{} (\pctRandomDrivenRDiv{}) \\
\bottomrule
\end{tabular}
\end{table}

\subsection{RQ2: Stop the drawing, change the game}
Table~\ref{tab:diverge} counts the projects whose state after ten seconds differs from the reference run's. Under \KoneM{}, the throttle alone, \nIdleMutedDiverge{} of the \nIdleMutedDen{} idle games that have a drawing sprite end somewhere else (\pctIdleMutedDiverge{}), and \nDrivenMutedDiverge{} of \nDrivenMutedDen{} played games do (\pctDrivenMutedDiverge{}); in the random sample the shares are \pctRandomIdleMDiv{} and \pctRandomDrivenMDiv{} (95\% interval \ciRandomDrivenMutedDiverge{} for the played share). What moves is what a player keeps score of: of the \nDrivenMutedDiverge{}, \nDrivenMutedStage{} differ in what the stage shows, and the variable that moves most often is a score (\movedMScore{} games), with clone bookkeeping, positions and timers after it.

Hiding the same sprites (\Kone{}) changes fewer games, \nIdleHiddenDiverge{} (\pctIdleHiddenDiverge{}) idle and \nDrivenHiddenDiverge{} (\pctDrivenHiddenDiverge{}) played, and the surprise is that it changes different ones. In the played runs \nDrivenHiddenAndMuted{} games diverge under both knobs, \nDrivenMutedNotHidden{} under muting only and \nDrivenHiddenNotMuted{} under hiding only. Muting leaves the sprite in play, so an enemy that moves twenty times faster still catches the player; hiding takes it out of play, so a released enemy no longer touches anything. In all \nDrivenHiddenNotMuted{} games that diverge under hiding alone the two knobs removed the same requests, so what separates them is the second mechanism, the ordinary semantics of a hidden sprite (Section~\ref{sec:cases}'s control).

Three checks say the number is not an artefact of the set-up. Play matters, but not the way the keys were held: of the games that receive a key, \nDrivenKeysMutedDiverge{} of \nDrivenKeysMutedDen{} (\pctDrivenKeysMutedDiverge{}) diverge under muting, against \nDrivenNoKeysMutedDiverge{} of \nDrivenNoKeysMutedDen{} (\pctDrivenNoKeysMutedDiverge{}) of those that receive none. With the keys tapped instead of held (down for six frames, up for 24, from frame 15) the muted and renderer-less shares are \pctTappedMutedDiverge{} and \pctTappedRawDiverge{} (\nTappedMutedDiverge{} of \nTappedMutedDen{}, \nTappedRawDiverge{} of \nTappedRawDen{}). The seed matters to the state but not to the label: a change of seed alone moves \pctSeedTwoDiverge{} of the played games. Yet at a second seed the muted row is \nSeedTwoMutedDiverge{} of \nSeedTwoMutedDen{} (\pctSeedTwoMutedDiverge{}) and the per-game label agrees across the two seeds in \nSeedTwoMutedLabelAgree{} of \nSeedTwoMutedPaired{} games (hidden \nSeedTwoHiddenLabelAgree{} of \nSeedTwoHiddenPaired{}, renderer-less \nSeedTwoRawLabelAgree{} of \nSeedTwoRawPaired{}). And the knob mutes every drawing sprite at once; muting one at a time, the scenario of the introduction, moves \nSingleGamesAny{} of \nSingleGames{} games (\pctSingleGamesAny{}) for at least one sprite, \nSingleSpritesDiverge{} of the \nSingleSprites{} drawing sprites in all. E1 fires when any sprite requests, so one sprite's silence releases a loop only in the frames where no other sprite draws.

\takeaway{Answer to RQ2: with the drawing sprites' requests muted, about one played public game in \mutedOneIn{} ends its first ten seconds in a different state, \pctSingleGamesAny{} when one sprite at a time is muted, and what changes is score, position, clones and time. Hiding the same sprites changes fewer games, because a hidden sprite also leaves the game, and because a script often shows it again.}

\begin{table}[t]\centering\footnotesize
\caption{RQ3, the emulated gate against the editor's on the \nFidAGames{} real-clock games, 300 frames each, the harness against the editor's bundles with the harness's substitutions installed (real renderer), under the reference and each knob (\Kone{} hidden, \KoneM{} muted, \Ktwo{} budget 99, \Kthree{} renderer-less). A game is \emph{identical} in all its frames, or its first difference is in the rounds (\emph{gate}), in rounds and state at once (\emph{tie}), or in the state alone (\emph{sensing}: pixels against boxes). \emph{Rounds equal} is the share of the 18,000 frames with the same number of rounds; \emph{same status}, the games both call free-running; \emph{same verdict}, the games both agree on whether the knob moves the end state (the game without a drawing sprite left out).}
\label{tab:fidelity}
\setlength{\tabcolsep}{5pt}
\begin{tabular}{@{}lrrrrr@{}}
\toprule
 & Ref. & \Kone{} & \KoneM{} & \Ktwo{} & \Kthree{} \\
\midrule
Identical (games) & \nFidAIdentical{} & \nFidBIdentical{} & \nFidMIdentical{} & \nFidCIdentical{} & \nFidRIdentical{} \\
Gate & \nFidAGate{} & \nFidBGate{} & \nFidMGate{} & \nFidCGate{} & \nFidRGate{} \\
Tie & \nFidAAmbiguous{} & \nFidBAmbiguous{} & \nFidMAmbiguous{} & \nFidCAmbiguous{} & \nFidRAmbiguous{} \\
Sensing & \nFidASensing{} & \nFidBSensing{} & \nFidMSensing{} & \nFidCSensing{} & \nFidRSensing{} \\
\addlinespace[3pt]
Rounds equal (frames) & \pctFidARoundsAgree{} & \pctFidBRoundsAgree{} & \pctFidMRoundsAgree{} & \pctFidCRoundsAgree{} & \pctFidRRoundsAgree{} \\
Same status (games) & \nFidAClassAgree{}/\nFidAGames{} & \nFidBClassAgree{}/\nFidBGames{} & \nFidMClassAgree{}/\nFidMGames{} & \nFidCClassAgree{}/\nFidCGames{} & \nFidRClassAgree{}/\nFidRGames{} \\
Same verdict (games) & -- & \nWebAgreeB{}/\nWebDenB{} & \nWebAgreeM{}/\nWebDenM{} & -- & \nWebAgreeR{}/\nWebDenR{} \\
\bottomrule
\end{tabular}
\end{table}

\subsection{RQ3: Same game, faster computer}
Under \Ktwo{} we raise the budget from \harnessRoundsCap{} to \harnessRoundsFast{} rounds per frame, which is what a faster machine does to exit E3 in the frames that reach it. The end state changes in \nIdleFasterDiverge{} of \nIdleFasterDen{} idle games (\pctIdleFasterDiverge{}) and \nDrivenFasterDiverge{} of \nDrivenFasterDen{} played games (\pctDrivenFasterDiverge{}), and as rarely in the random sample (Table~\ref{tab:diverge}). The count is small, and the mechanism says why (Proposition~\ref{prop:throttle}): the budget matters only to a loop that is free-running, and a loop free-runs only in frames where nothing visible draws, which in a game being played is rarely the case. The \nBudgetSubset{} idle games that do free-run say the same at 6, 99, 399 and 1,599 rounds per frame: the end state differs from the default in \nBudgetDivergeSix{}, \nBudgetDivergeNinetyNine{}, \nBudgetDivergeFifteenNinetyNine{} and \nBudgetDivergeFifteenNinetyNine{} games, and the witness at 99 against 24 finds \nBudgetWitnessNever{} of the \nBudgetWitness{} differing in no frame at all.

Nor are the two knobs that release loops budget-bound. With both runs at 99, 399 and 1,599 rounds (a renderer-less run the step cap cuts off left out, \nDrivenBudgetFifteenNinetyNineRawExcluded{} at 1,599), muting changes \pctDrivenBudgetNinetyNineMutedDiverge{}, \pctDrivenBudgetThreeNinetyNineMutedDiverge{} and \pctDrivenBudgetFifteenNinetyNineMutedDiverge{} of the played games and the renderer-less run \pctDrivenBudgetNinetyNineRawDiverge{}, \pctDrivenBudgetThreeNinetyNineRawDiverge{} and \pctDrivenBudgetFifteenNinetyNineRawDiverge{}, against \pctDrivenMutedDiverge{} and \pctDrivenRawDiverge{} at 24, and the reference runs at 399 and 1,599 reach the same end state in \nBudgetRefSame{} of \nBudgetRefPairs{} games. A run repeated at the same seed reproduces its end state and its rounds exactly (\nRepeatIdentical{} of \nRepeatPairs{}). The aligned build run on the \nFidAGames{} games under WebKit and Firefox runs the same rounds as under Chromium in every frame and, with no renderer, reaches Chromium's end state in \nEngineNWebKitSame{} of \nEngineNWebKitPairs{} games under both (with the renderer, whose pixels the browsers draw differently, in \nEngineWebKitSame{} and \nEngineFirefoxSame{}). On a second machine (a Windows workstation, x86-64, Node~24 against Node~22) the played sweep's reference, muted and renderer-less runs reach the laptop's end state and rounds in \nRemoteReferenceSame{}, \nRemoteMutedSame{} and \nRemoteRawSame{} of \nRemoteReferencePairs{} games, the exception a camera coordinate's sixteenth digit, the engine's arithmetic that the proposition excepts.

\runin{The editor on a real clock} The real clock says where the machine's influence comes from, and it is mostly not the throttle. Of \nRealProjects{} played games, \nRealAgreeOne{} (\pctRealAgreeOne{}) are \emph{stable} at $1\times$, three runs reaching exactly the same state with the random source pinned; at $20\times$ it is \nRealAgreeTwenty{}, and at both rates \nRealAgreeBoth{}. On the machine throttled twenty times, \nRealDifferTwenty{} games end somewhere else in every one of their runs, and only \nRealDifferTwentyFree{} of them free-run at $1\times$. The other \nRealDifferTwentyThrottled{} do not, and \nRealThrottledNoFreeFrame{} of them have no frame with more than one round at either rate: their loops run once per frame on both machines, and what differs is what the wall clock does to their \emph{timer}, \emph{wait} and \emph{glide} blocks when each frame takes longer. A machine reaches a program in two ways, then, through the free-run count and through the program's own clock, and only the first is the throttle. Muting at native speed says what the harness says. On the \nRcmGames{} games run again with the drawing sprites' requests suppressed, three runs each, every muted run ends elsewhere than every reference run in \nRcmDiffersAll{} games, the released loops running \medRcmMutedRoundsPerFrame{} rounds per frame in the median run. The harness's muted label at 24 rounds agrees with that outcome in \nRcmAgree{} of \nRcmPaired{} games; in \nRcmWebOnly{} the native-speed runs part where 24 rounds did not, and in \nRcmHarnessOnly{} the harness parts where the native-speed runs end alike.

\runin{The emulated gate against the editor's, frame by frame} Table~\ref{tab:fidelity} is the check the aggregate could not make: the same executions compared after every frame. Under the reference, \nFidAIdentical{} games are identical in every one of their 300 frames, in rounds and in state. In \nFidAGate{} game the first difference is in the rounds, by one round in one frame, and the states never differ. In \nFidAAmbiguous{} game rounds and state first differ in the same frame, which either cause can produce. The other \nFidASensing{} differ first in the state with the rounds equal, where pixels answered a touch or a fence differently from boxes. Over all frames the two runtimes ran the same number of rounds in \pctFidARoundsAgree{}. The frames before the states part, where the comparison is between the same executions, are \nFidAPrefixFrames{} of the 18,000 (the games that part do so early, at frame \medFidAFirstState{} in the median), and in them the rounds differ in \nFidAPrefixDisagree{} frame, the gate case. The gate is exercised in that prefix: \nFidPrefixMulti{} of its frames ran more than one round in one runtime or both, in \nFidPrefixMultiGames{} games, and the two ran the same number in \nFidPrefixMultiAgree{} of them; over the 300 frames, \nFidFreeGames{} of the \nFidAGames{} games free-run in one runtime or both, and the two agree on which in \nFidFreeGamesAgree{}. Under the knobs the picture is the same, with the rounds equal in \pctFidRRoundsAgree{} of frames at the least, and the verdicts transfer. Run in the editor's bundles with the real renderer and pixel sensing, the same games diverge under muting in \nWebDivergeM{} of \nWebDenM{} cases (\nWebBothM{} in the harness on the same games). The two runtimes agree on which games diverge in \nWebAgreeM{} of \nWebDenM{} under muting, \nWebAgreeB{} of \nWebDenB{} under hiding and \nWebAgreeR{} of \nWebDenR{} without a renderer, every disagreement a game the editor's bundles move and the harness does not. The check also found errors: its first runs found four defects in our reading of the source, which the emulation had implemented and the corpus would have hidden (the pen's and a visible clone's creation requests missing, the layer methods' and a hidden bubble's requests spurious). Every number in this paper comes from the corrected table and harness.

\takeaway{Answer to RQ3: the machine reaches a loop through the free-running frames, so within ten seconds a larger budget changes few games, played or idle; the frame rate reaches the throttled loops instead, through the wall clock in their own blocks. Frame by frame, the emulated gate runs the same rounds as the editor's in \pctFidARoundsAgree{} of frames under the reference and at least \pctFidRRoundsAgree{} under every knob, and the two runtimes agree on which games a knob moves.}

\subsection{RQ4: Tools that run the wrong program}
Under \Kthree{} the redraw emulation is off: the situation of a tool that executes the virtual machine without a renderer and without emulating the gate, in which a frame that no \emph{wait} interrupts runs to the budget. It is the largest effect of the study. Of the idle games, \nIdleRawDiverge{} of \nIdleRawDen{} (\pctIdleRawDiverge{}) end somewhere else than under the emulated gate, and \nDrivenRawDiverge{} of \nDrivenRawDen{} played games (\pctDrivenRawDiverge{}) do; in the random sample the shares are \pctRandomIdleRDiv{} and \pctRandomDrivenRDiv{} (Table~\ref{tab:diverge}). The knob removes the throttle from every loop at once, and under the projection of the hidden and muted rows, the drawing sprites left out, it still moves \nDrivenProjRawDiverge{} of \nDrivenProjRawDen{} (\pctDrivenProjRawDiverge{}). What moves is visible: of the \nDrivenRawDiverge{} divergent played games, \nDrivenRawStage{} differ in what the stage shows (\nDrivenRawStageOnly{} on the stage only) and \nDrivenRawVarsOnly{} only in variables, and the variable that moves most often is again a score (\movedRScore{} games), then positions, timers and velocities. In the catch-the-fruit game 120830462 the score ends at \caseFruitScoreR{} instead of \caseFruitScoreA{}, because the fruit falls at \caseFruitRoundsR{} rounds per frame: a grader that runs this game without a renderer grades a game the editor never runs.

\takeaway{Answer to RQ4: executed without a renderer and without the gate, more than half of the public games we played end their first ten seconds somewhere else, most of them visibly on the stage. A tool that runs the virtual machine without a renderer must emulate the redraw gate or it runs a different program.}

\subsection{RQ5: Does it reach a verdict?}
A rubric's fate depends on what it reads. Under the reference the trigger-response rules give \nFlipBasePass{} pass, \nFlipBaseFail{} fail and \nFlipBaseUntested{} untested verdicts, and the rate-reading rules \nFlipRateBasePass{}, \nFlipRateBaseFail{} and \nFlipRateBaseUntested{}; a flip that is not a reversal is to or from untested. Of the \nFlipRules{} trigger-response verdicts, \nFlipB{} (\pctFlipB{}) flip when the drawing sprites are hidden, almost all to untested from pass and from fail alike, because a hidden fish cannot be clicked and a hidden cat cannot touch: the second mechanism of RQ2, not the throttle. The throttle's own share is the muted row, \nFlipM{} flips with \nFlipRevM{} reversals (\ruleRevsM{}); the budget flips \nFlipC{} and the renderer \nFlipR{} (\nFlipRevR{} reversals).

The rate-reading rules are the ones the mechanism says a throttle should move, and they move in both directions. Of \nFlipRateRules{} verdicts, \nFlipRateM{} flip under muting and every one is a reversal (\flipRateKindsM{}; by rule, \ruleRateRevsM{}); without a renderer \nFlipRateR{} flip, \nFlipRateRevR{} of them reversals; under hiding \nFlipRateB{} flip, almost all to untested, because a rule about a hidden sprite's position is blocked rather than refuted; under the larger budget \nFlipRateC{} flip, for the reason RQ3 gave, since in the reference run these loops are paced by the screen and a larger budget never reaches them. A fail$\rightarrow$pass reversal is a pass that the reference semantics does not give: a false accept whenever the rubric means the editor's pace, which the tutorial's does. The \nFlipRateM{} flips are few because most remixes keep the tutorial's drawing loops.

\runin{Rules written by others} Rules we did not write repeat the question. Whisker's \nWhiskerTests{} tests on \nWhiskerSuites{} projects give, under the reference, \nWhiskerBasePass{} pass, \nWhiskerBaseFail{} fail and \nWhiskerBaseSkip{} skip (Whisker's own third outcome), as the tests and the programs ship. Hiding the drawing sprites flips \nWhiskerFlipsB{} (\nWhiskerRevB{} reversals). Muting them flips \nWhiskerFlipsM{} verdicts (\pctWhiskerFlipsM{} of the tests, \nWhiskerFlipsMAgreed{} of the \nWhiskerAgreedTests{} on which Whisker's own runner and our driver agree), \nWhiskerRevM{} of them reversals; of the 20, 18 read a position, a count or the timer after a fixed time, and 2 check a touch. The renderer-less run flips \nWhiskerFlipsR{} (\pctWhiskerFlipsR{}) with \nWhiskerRevR{} reversals, \nWhiskerFlipsRAgreed{} of the flips among the agreed tests, and the larger budget flips \nWhiskerFlipsC{}. The Boat Race tutorial tells the same story from a classroom's side, with larger numbers. Of \nBoatVerdicts{} verdicts, \nBoatFlipsM{} (\pctBoatFlipsM{}) flip under muting and \nBoatFlipsR{} (\pctBoatFlipsR{}) without a renderer, \nBoatFlipsC{} under the larger budget (hiding also flips \nBoatFlipsB{}, with \nBoatRevB{} reversals, the same total by coincidence); \nBoatRevM{} of the muted flips and \nBoatRevR{} of the renderer-less ones are reversals. Split by kind, the rules say what ours said: under muting the four rules that read a rate, a position or a clock after a fixed time carry \nBoatRateFlipsM{} of the \nBoatFlipsM{} flips, \nBoatRateRevM{} of them reversals. The five rules that check for an event within a window carry \nBoatTriggerFlipsM{}, with \nBoatWindowRevM{} reversals (that the reversals also total \nBoatRateFlipsM{} is a coincidence). Why do these rate rules flip in a fifth of the remixes when ours flip in a few? They read the pace and position of the boat itself, the sprite the knob mutes, whose motion loop is the released loop; our rate rules read counters kept by loops that the tutorials' other drawing sprites keep throttled. The boat's pace rule alone flips \nBoatSpeedFlipsM{} times under muting, since a boat that covers 15 to 150 pixels a second at one step per frame covers hundreds once its loop free-runs.

\takeaway{Answer to RQ5: rules that check for an event within a window of frames rarely reverse under the throttle, the budget and the renderer (\nFlipRevM{} and \nFlipRevR{} reversals in \nFlipRules{} verdicts), and flip under hiding because a hidden sprite cannot be clicked or touched. Rules that read a value after a fixed time reverse under muting and under the renderer, in both directions, and not under the budget, which reaches only loops that already free-run. Whisker's own suites and a tutorial's rules flip in the same places.}

\section{Two Cases, One Control, and What To Do}\label{sec:cases}
The counts of Section~\ref{sec:eval} hide two mechanisms, so we walk two real games through the knobs, and then say what each audience can do.

\runin{Case 1: a pong game whose score lives in a loop that never draws} Project 974172893 (\emph{The great PONG game}, played with no keys) has two sprites, a ball and a paddle, and one loop that keeps score: \emph{forever, if touching Paddle then change score by 1}. The loop never draws. The ball's other script moves it, and that motion ends every frame, so as saved the score loop runs \casePongTwoRoundsA{} round per frame, a touch that lasts a frame counts once, and the score after ten seconds is \casePongTwoScoreA{}. Mute the ball's and the paddle's requests (\KoneM{}) and the same loop runs \casePongTwoRoundsM{} rounds per frame, counting every round in which the ball, which now also moves \casePongTwoRoundsM{} times a frame, touches the paddle; the score is \casePongTwoScoreM{}. Hide the same two sprites (\Kone{}) and the loop runs as fast, but a hidden ball touches nothing and the score stays at \casePongTwoScoreB{}: the two mechanisms of RQ2 in one game. In the editor's bundles with the real renderer and pixel sensing, the reference, the muted and the hidden runs score \casePongTwoWebScoreA{}, \casePongTwoWebScoreM{} and \casePongTwoWebScoreB{}. Project 1051219841, a pong game with a stopwatch, is the pen exception. Hiding or muting the sprites that draw leaves every loop at \casePongRoundsB{} round per frame, because a hidden sprite draws the lava with the pen, which requests whether or not its sprite is visible (Table~\ref{tab:draw}).

\runin{The control: a runner whose lives depend on a hidden sprite for another reason} Project 1273597743 is an endless runner in which a toucan dodges buildings and eats watermelons. As saved, the player loses all three lives within ten seconds of an idle run (LIVES reaches \caseRunnerLivesA{}). With the \caseRunnerHidden{} drawing sprites hidden (\Kone{}) the loops run \caseRunnerRoundsB{} rounds per frame and LIVES ends at \caseRunnerLivesB{}, the starting value. This is not the throttle: a hidden sprite neither touches nor is touched, so the buildings never hit the toucan. Under \KoneM{}, where the same sprites stay visible and touchable and only their requests are gone, the loops run \caseRunnerRoundsM{} rounds per frame (fewer than under hiding, because the game ends when the lives run out and its loops stop) and LIVES ends at \caseRunnerLivesM{}. The editor's bundles agree (\caseRunnerWebLivesA{}, \caseRunnerWebLivesB{} and \caseRunnerWebLivesM{} lives under the reference, hidden and muted). Hence the throttle alone changes the pace of the game and not the number of lives lost.

\runin{For learners} We propose one sentence in place of the one the wiki gives, and we make no claim yet that children understand it. It reads: \emph{\theSentence{}} ``Changes how one looks'' stands for the looks blocks; the sentence's list is the lint's predicate. It describes the default mode: turbo mode and \emph{run without screen refresh}, which the Wiki documents, take a loop out of it. A teacher without a harness can compare two children's timers or scores only after checking that something visible moves in every frame of each project.

\runin{For automated graders} A grader that runs a program in a different environment from the editor's sees a different ten-second state for \pctDrivenRawDiverge{} of the games of RQ4, and whether a verdict follows depends on what the rubric reads (RQ5). The remedy is the budgeted semantics: emulate the gate, answer sensing from the renderer where the rubric reads a touch, and publish the budget with the verdict. A budget is a count of rounds, not a machine, so it makes a verdict reproducible rather than faithful to any one computer. RQ3 says that between 99 and 1,599 rounds the muted and renderer-less contrasts and the reference's end states barely move, so a grader can state a budget in that range and expect another grader's to agree. Where a verdict depends on a rate, the check should run alongside it: a project that diverges under muting is reported as rate-sensitive with its witness, a third outcome beside pass and fail, for the rubric's author to settle. On this corpus that is about one played game in \mutedOneIn{}, and a rule that checks for an event within a window is the rule that keeps it off that pile. On the \nCtRemixes{} tutorial remixes under our window rules, the check on the full state (the muted sprites' own included, since a rubric may read it) calls \nCtMoved{} (\pctCtMoved{}) rate-sensitive under muting; \nCtFlipped{} have a verdict that flips, \nCtBoth{} of them flagged. The \nCtMovedNotFlipped{} flagged without a flip are the price of the third outcome under window rules, and the \nCtFlippedNotMoved{} flipped without a flag are flips in probe runs the check did not repeat: the check belongs beside the verdict's own run.

\runin{For the Scratch platform} The throttle is a performance heuristic that leaked into program meaning: the redraw flag exists so that the screen is not drawn more often than it changes. Pacing every loop at one round per frame unless the project opts out, as \emph{run without screen refresh} already lets a custom block do, removes the throttle's share of \Kone{} and \Kthree{}. The cost is the speed that non-drawing loops enjoy today and the meaning of the \nFreeLoops{} free-running counters, which is why it needs the opt-out. A round budget in place of the wall-clock exit removes the scheduler's share of \Ktwo{}.

\section{Limitations and Threats to Validity}\label{sec:threats}
\runin{The harness is not the editor} The harness runs \harnessRoundsCap{} rounds per frame where the editor on a fast laptop manages tens of thousands; RQ3's contrasts at 99 to 1,599 rounds and the native-speed muting run bound the difference, and we make no claim that Table~\ref{tab:diverge}'s percentages are bounds at the editor's thousands. Sensing is answered by bounding boxes rather than pixels, the colour predicates by \emph{false}, and a \emph{warp} block yields after 500 checks; the answers are the same under every knob, so a knob's divergence is not caused by them. A colour predicate appears in \nGamesColour{} games (\pctGamesColour{}) and \nRandomColour{} sampled projects; on the games without one the muted and renderer-less shares are \pctDrivenNoColourMutedDiverge{} and \pctDrivenNoColourRawDiverge{} (\nDrivenNoColourMutedDiverge{} of \nDrivenNoColourMutedDen{}, \nDrivenNoColourRawDiverge{} of \nDrivenNoColourRawDen{}). The frame-by-frame check of RQ3 aligned the gate but cannot align sensing: \nFidASensing{} of the \nFidAGames{} games part where pixels and boxes answer a touch or a fence differently. The percentages of Table~\ref{tab:diverge} are therefore differences under the editor's gate and the harness's sensing, on \nFidAGames{} games checked and \nCorpusProjects{} not.

\runin{Ten seconds, held keys, a literal comparison} Our runs are short and our player is crude: it holds or taps the keys a project listens to and never clicks. A longer horizon and a driver that plays the game would reach other states, and either can raise or lower the counts of RQ2 and RQ4. A rate change also changes how fast a free-running loop consumes the random stream, so a project that draws random numbers in such a loop can diverge for that reason as well as for the throttle's. Such a loop is in \nProjectsRandomPureLoop{} games, among them \nDrivenMutedDivergeRandomLoop{} of the \nDrivenMutedDiverge{} that diverge under muting.

\runin{Rules, remixes and the Whisker driver} We wrote the rules of RQ5's first two sets from the tutorials' instructions and classified every rule by kind ourselves, with no second rater; the Boat Race split was fixed before its flips were counted. The Boat Race rules are the tutorial's steps in our tolerances. Whisker's own runner agrees with the driver on \nWhiskerNativeAgreeA{} of \nWhiskerNativeTests{} tests, and on \nWhiskerNativeAgreeW{} at a budget of 24,999 rounds. The disagreements are therefore not the budget but the drivers' timing of mouse moves, clone clicks and bubbles, which differs by a frame or two.

\runin{Populations} The games are popular projects found by genre queries, which favours animated projects with many sprites; the random sample holds stories and animations as well, and its divergence rows are lower, the direction the sampling frame predicted. Its free-running row is higher (\nRandomIdleFree{} of \nRandomIdleMeasured{} idle): a story in which nothing visible moves runs its loops to the budget, and what they write is rarely something a knob changes. The lint's per-loop precision (RQ1) is low, and the categories of RQ1's loops are a keyword heuristic whose residual class is the largest. We read the rule from \texttt{scratch-vm} \vmVersionNote{}; the work loop has had this shape since 2019~\cite{scratchvm2138}.

\section{Related Work}\label{sec:related}
\runin{Analyses of Scratch programs} Studies of the public repository have measured which blocks children use and how projects grow~\cite{aivaloglou2016kids}, which code smells recur~\cite{hermans2016smells,techapalokul2017quality} and which bugs are common~\cite{fraedrich2020bugs}. Static analyses look for patterns in the blocks (Hairball~\cite{boe13hairball}, Dr.\ Scratch~\cite{morenoleon15drscratch}, LitterBox~\cite{fraser21litterbox}, Catnip~\cite{obermueller2021catnip}).

\runin{Running Scratch programs, with and without a renderer} Every dynamic tool for Scratch runs the program somewhere. The editor and the player run \texttt{scratch-vm} in the browser with the renderer at 30 frames per second, and TurboWarp runs a fork at 30 or 60 whose turbo mode disables E1~\cite{turbowarpfps}. Itch translates the project to Python and runs the translation~\cite{johnson2016itch}; Bastet interprets a translation of it in a semantics of its own, LeILa~\cite{stahlbauer2020verified}; Whisker sends user events to a project running in Firefox or headless Chromium with the renderer and checks properties with test scripts~\cite{stahlbauer19,deiner23whisker,goetz2022model}; block-based tests run inside the editor under a five-second timeout~\cite{feldmeier24}; NuzzleBug is a record-and-replay debugger in the editor~\cite{deiner2024nuzzlebug}; SchedCheck and ScratchLens run \texttt{scratch-vm} without a renderer~\cite{schedcheck26,si26lens}. The tools that run in a browser with the renderer run the editor's gate and its wall clock, and what they inherit is \Ktwo{}. Whisker's wrapper pauses one period divided by the acceleration $a$ between test steps, so a free-running frame still takes its 25\,ms of real work while the clock the program reads runs $a$ times faster. Its documentation recommends $a \leq 10$ ``as very low execution times may lead to non-deterministic program behaviour'', and the mechanism predicts that symptom. The tools that execute \texttt{scratch-vm} without a renderer run under \Kthree{} unless they emulate the gate, and none of their descriptions says whether a frame ends at a redraw request. Itch and Bastet have no frame at all, so a loop's rate there is the translation's. Automated assessment of programs is a mature practice~\cite{ihantola2010review,paiva2022assessment}, and feedback generators compare a student's traces with a reference's~\cite{singh2013feedback,gulwani2018clustering}. These systems assume that a program's behaviour is fixed by its text and its input, which for a Scratch program holds only under a stated budget, an emulated gate and aligned reads (Proposition~\ref{prop:agree}).

\runin{Hidden semantics of event-driven programs} The order in which event handlers run is a hidden coordinate of event-driven programs, and race detectors for web applications find the schedules that expose it~\cite{petrov2012race,raychev2013eventracer,desai2015pingpong}. For Scratch, SchedCheck showed that the order in which cooperative threads run within a frame follows the sprites' stacking order and that about a fifth of concurrent student projects are sensitive to it~\cite{schedcheck26}. That work holds the frame fixed and varies the order; this paper holds the order fixed and varies the rate, and the two checks compose. SchedCheck and ScratchLens run the virtual machine without a renderer and so inherit \Kthree{}; their order-sensitivity counts were taken under the free-running frame, and repeating them under the budgeted semantics is future work.

\runin{Determinism, flakiness and metamorphic tests} Deterministic replay and multithreading replace a decision the clock makes by one the program's progress makes~\cite{choi1998dejavu,olszewski2009kendo,bergan2010coredet,liu2011dthreads}, or choose schedules systematically or at random under a bound~\cite{musuvathi2008chess,burckhardt2010pct}. The budgeted semantics is the same move for one decision, the number of rounds in a frame. Timing is the most common cause of flaky tests~\cite{luo14flaky,eck2019flaky,parry2021flaky,lam2019idflakies,gligoric2015nondex}, and record-and-replay removes the wall clock from a web application's execution~\cite{andrica2011warr,burg2013replay}, as the budgeted semantics does for Scratch's frame. The check of Section~\ref{sec:tool} is a metamorphic test~\cite{segura16metamorphic,chen2018metamorphic}: it has no oracle for what a project should do and asks whether a transformation that should not matter changes the output. The throttle is a source of flakiness with a mechanism, not a race between two threads but a gate that couples all of them to the screen.

\runin{Learners' models of execution} The education literature has measured what novices believe about execution: misconceptions of loops and variables in block-based programming~\cite{grover2017misconceptions,swidan2018misconceptions}, the gap between what children's Scratch code does and what they think it does~\cite{salac2020build}, and preconceptions of concurrency~\cite{kolikant2001gardeners,meerbaum2013learning}. The sentence and the lint of Section~\ref{sec:cases} are proposals in this space, not results.

\section{Conclusion}\label{sec:conclusion}
A Scratch loop that draws runs once per frame; a loop that neither draws nor waits runs as fast as the computer allows, until something visible on the stage asks the screen to redraw. In the editor, hiding one sprite makes another count \editorHiddenOverVisibleThirty{} times higher. On \nCorpusProjects{} popular public games, more than half contain a loop paced this way, muting the sprites that draw changes the state after ten seconds of about one played game in \mutedOneIn{}, a third of them on the stage, and executing without a renderer changes more than half of them. What the throttle moves is a score, a clock or a position, so a rubric that reads one of those is the rubric at risk, and rules of the form \emph{whenever A, then B within k frames} are not. A tool that runs the virtual machine without a renderer must emulate the gate and state its budget, or it grades a program the editor never runs. The population under this rule is 135 million registered users, most of them children; the sentence they could be taught is in Section~\ref{sec:cases}.

\bibliographystyle{ACM-Reference-Format}
\bibliography{refs}
\end{document}